\newtheorem{definition}{Definition}
\newtheorem{theorem}{Theorem}
\def\beq{\begin{equation}}
\def\eeq{\end{equation}}
\def\R{{\mathbb{R}}}
\def\argmin{\mathop{\mathrm{arg\,min}}}
\def\argmax{\mathop{\mathrm{arg\,max}}}
\def\diag{\mathop{\mathrm{diag}}}
\def\Diag{\mathop{\mathrm{Diag}}}
\DeclareMathOperator{\Tr}{tr}
\DeclareMathOperator{\Cov}{Cov}
\def\x{\times}
\def\xhat{\widehat{x}}
\def\arr{\rightarrow}
\def\Exp{\mathbb{E}}
\def\km1{k\! - \! 1}
\def\kp1{k\! + \! 1}
\newcommand{\zero}{\mathbf{0}}
\newcommand{\dbf}{\mathbf{d}}
\newcommand{\fbf}{\mathbf{f}}
\newcommand{\gbf}{\mathbf{g}}
\newcommand{\pbf}{\mathbf{p}}
\newcommand{\qbf}{\mathbf{q}}
\newcommand{\rbf}{\mathbf{r}}
\newcommand{\sbf}{\mathbf{s}}
\newcommand{\ubf}{\mathbf{u}}
\newcommand{\vbf}{\mathbf{v}}
\newcommand{\wbf}{\mathbf{w}}
\newcommand{\xbf}{\mathbf{x}}
\newcommand{\xbfhat}{\widehat{\mathbf{x}}}
\newcommand{\ybf}{\mathbf{y}}
\newcommand{\zbf}{\mathbf{z}}
\newcommand{\Abf}{\mathbf{A}}
\newcommand{\Cbar}{\overline{C}}
\newcommand{\Hbf}{\mathbf{H}}
\newcommand{\Ibf}{\mathbf{I}}
\newcommand{\Pbf}{\mathbf{P}}
\newcommand{\Qbf}{\mathbf{Q}}
\newcommand{\Sbf}{\mathbf{S}}
\newcommand{\Ubf}{\mathbf{U}}
\newcommand{\Vbf}{\mathbf{V}}
\newcommand{\Xhat}{\widehat{X}}
\newcommand{\gammahat}{{\widehat{\gamma}}}
\def\xibf{{\boldsymbol \xi}}
\def\taubar{{\overline{\tau}}}
\def\tauhat{{\widehat{\tau}}}
\newcommand{\lambdabar}{{\overline{\lambda}}}
\newcommand{\thetabar}{{\overline{\theta}}}
\newcommand{\thetabf}{{\boldsymbol{\theta}}}
\newcommand{\thetabfbar}{{\overline{\boldsymbol{\theta}}}}
\newcommand{\thetahat}{{\widehat{\theta}}}
\newcommand{\thetabfhat}{{\widehat{\boldsymbol{\theta}}}}
\newcommand{\phibf}{{\boldsymbol{\phi}}}
\newcommand{\mubar}{\overline{\mu}}
\def\alphabar{\overline{\alpha}}
\def\etabar{\overline{\eta}}
\def\gammabar{\overline{\gamma}}
\def\Ecal{{\mathcal E}}
\newcommand{\indic}[1]{\mathbbm{1}_{ \{ {#1} \} }}
\newcommand{\tran}{^{\text{\sf T}}}
\newcommand{\bkt}[1]{{\langle #1 \rangle}}
\def\PLeq{\stackrel{PL(2)}{=}}
\def\Norm{{\mathcal N}}
\title{Rigorous Dynamics and Consistent Estimation in Arbitrarily Conditioned Linear Systems}
\author{
  Alyson K. Fletcher, Mojtaba Sahraee-Ardakan, Philip Schniter, and Sundeep Rangan%
  \thanks{A. K. Fletcher and M. Sahraee-Ardakan are with the University of California, Los Angeles.}
  \thanks{P. Schniter is with The Ohio State University.}
  \thanks{S. Rangan is with New York University.}
}
\begin{document}

\maketitle

\begin{abstract}
The problem of estimating a random vector $\xbf$
from noisy linear measurements $\ybf=\Abf\xbf+\wbf$ with unknown parameters
on the distributions of $\xbf$ and $\wbf$, which must also be learned,
arises in a wide range of statistical learning and linear
inverse problems.  We show that a computationally simple
iterative message-passing algorithm can provably obtain asymptotically
consistent estimates in a certain high-dimensional large-system
limit (LSL) under very general parameterizations.
Previous message passing techniques have required i.i.d.\
sub-Gaussian $\Abf$ matrices and often fail when the matrix is
ill-conditioned. The proposed algorithm, called
adaptive vector approximate message passing (Adaptive VAMP) with auto-tuning,
applies to all right-rotationally random $\Abf$.
Importantly, this class includes matrices with arbitrarily bad conditioning.
We show that the parameter estimates and mean squared error (MSE) of $\xbf$
in each iteration converge to deterministic limits
that can be precisely predicted by a simple set of
state evolution (SE) equations.
In addition, a simple testable condition is provided in which
the MSE matches the Bayes-optimal value
predicted by the replica method.
The paper thus provides a computationally simple method with
provable guarantees of optimality and consistency over a
large class of linear inverse problems.
\end{abstract}


\section{Introduction}

Consider the problem of estimating a random vector $\xbf^0$
from linear measurements $\ybf$ of the form
\beq \label{eq:yAx}
    \ybf = \Abf \xbf^0 + \wbf, \quad
    \wbf \sim {\mathcal N}(\zero,\theta_2^{-1} \Ibf),
    \quad
    \xbf^0 \sim p(\xbf|\thetabf_1),
\eeq
where $\Abf\in\R^{M\times N}$ is a known matrix,
$p(\xbf|\thetabf_1)$ is a density on $\xbf^0$ with
parameters $\thetabf_1$,
$\wbf$ is additive white Gaussian noise (AWGN) independent of $\xbf^0$,
and $\theta_2 > 0$ is the noise precision (inverse variance).
The goal is to estimate $\xbf^0$,
while simultaneously learning the unknown parameters
$\thetabf := (\thetabf_1,\theta_2)$, from the data $\ybf$ and $\Abf$.
This problem arises in Bayesian forms
of linear inverse problems in signal processing,
as well as in linear regression in statistics.

Exact estimation of the parameters $\thetabf$ via maximum likelihood or other methods is generally intractable.
One promising class of approximate methods combines
approximate message passing (AMP) \cite{DonohoMM:09} with expectation-maximization (EM)\@.
AMP and its generalizations (e.g., \cite{Rangan:11-ISIT})
constitute a powerful, relatively new class of algorithms based on expectation propagation \cite{Minka:01} (EP)-type techniques.
The resulting AMP algorithms are computationally fast
and have been successfully applied to a wide range of problems,
e.g., \cite{FletcherRVB:11,Schniter:11,SomS:12,ZinielS:13b,Vila:TSP:14,Schniter:TSP:15,Ziniel:TSP:15,fletcher2014scalable}.
Most importantly, for large, zero-mean, sub-Gaussian i.i.d.\ random matrices $\Abf$,
the performance of these AMP methods can be exactly predicted by a scalar \emph{state evolution}
(SE) \cite{BayatiM:11,javanmard2013state} that provides testable conditions
for optimality, even for non-convex priors.  When the parameters $\thetabf$ are unknown,
AMP can be easily combined with EM for joint learning of the parameters $\thetabf$ and
vector $\xbf$~\cite{krzakala2012statistical,vila2013expectation,KamRanFU:12-IT}.
Under more general conditions on $\Abf$, however, not only do the theoretical results
not hold, but standard AMP techniques often diverge and require a variety of modifications for stability
\cite{RanSchFle:14-ISIT,Vila:ICASSP:15,manoel2015swamp,rangan2017inference}.
For example, when $\Abf$ has nonzero mean, its largest singular value grows with the
problem size, making $\Abf$ arbitrarily poorly conditioned.

A recent work \cite{fletcher2016emvamp} combined EM with the so-called Vector AMP (VAMP)
method of \cite{rangan2016vamp}. Similar to AMP, VAMP is based on
EP-like \cite{Minka:01} approximations of belief propagation \cite{rangan2016vamp} and can also be considered as a special case
 of expectation consistent (EC) approximate inference \cite{opper2004expectation,OppWin:05,fletcher2016expectation}.
VAMP's key attraction is that it applies to a larger class of matrices $\Abf$ than
the original AMP method.
It
has provable SE analyses and convergence guarantees that apply to all
right-rotationally invariant matrices $\Abf$ \cite{rangan2016vamp,takeuchi2017rigorous}~--
a significantly larger class of matrices than i.i.d.\ sub-Gaussian.
Under further mild conditions, the mean-squared error (MSE)
of VAMP matches the replica prediction for
optimality~\cite{tulino2013support,barbier2016mutual,reeves2016replica}.
When the distributions on $\xbf$ and $\wbf$ are unknown,
the work \cite{fletcher2016emvamp} proposed to combine EM and VAMP using the approximate inference
framework of \cite{heskes2004approximate}.
While \cite{fletcher2016emvamp} provided
numerical simulations suggesting excellent performance for EM-VAMP on several
synthetic problems, there were no provable convergence guarantees.

The contributions of this work are as follows:
\begin{itemize}
\item \emph{Rigorous state evolution analysis:}  We provide a rigorous analysis of
a generalization of EM-VAMP that we call Adaptive VAMP\@.
Similar to the analysis of VAMP, we consider a certain large-system limit (LSL)
where the matrix $\Abf$ is random and right-rotationally invariant.
Importantly, this class of matrices includes much more than i.i.d.\ Gaussian, as used in the
original LSL analysis of
 Bayati and Montanari \cite{BayatiM:11}.
It is shown (Theorem~\ref{thm:em-se}) that,
in this LSL, the parameter estimates at each iteration converge to deterministic limits $\thetabfbar_k$
that can be computed from a set of SE equations that extend those of VAMP\@.
The analysis also exactly characterizes the asymptotic joint distribution of the
estimates $\xbfhat$ and the true vector $\xbf^0$.
The SE equations depend only on the initial parameter estimate
, the adaptation function (to be discussed), and statistics on the matrix $\Abf$,
the vector $\xbf^0$, and the noise $\wbf$.

\item \emph{Asymptotic consistency}:
It is also shown (Theorem~\ref{thm:theta1cons}) that, under an additional identifiability condition
and a simple auto-tuning procedure,
Adaptive VAMP can yield provably consistent parameter estimates in the LSL\@.
This approach is inspired by an ML-estimation approach from \cite{KamRanFU:12-IT}.
Remarkably, the result is true under very general problem formulations.

\item \emph{Bayes optimality}:  In the case when the parameter estimates converge to the
true value, the behavior of adaptive VAMP matches that of VAMP\@.  In this case,
it is shown in \cite{rangan2016vamp} that, when the SE equations have a unique fixed point,
the MSE of VAMP matches the MSE of the Bayes optimal estimator, as predicted by the replica
method~\cite{tulino2013support,barbier2016mutual,reeves2016replica}.
\end{itemize}

In this way, we have developed a computationally efficient inference scheme for
a large class of linear inverse problems.
In a certain high-dimensional limit, our scheme guarantees that
(i) the performance of the algorithm can be exactly characterized,
(ii) the parameter estimates $\thetabfhat$ are asymptotically consistent,
and
(iii) the algorithm has testable conditions for when
the signal estimates $\xbfhat$ match the replica prediction of Bayes optimality.

\section{VAMP with Adaptation}

We assume that the prior density on $\xbf$ can be written as
\beq \label{eq:pxf}
    p(\xbf|\thetabf_1) = \frac{1}{Z_1(\thetabf_1)}\exp\left[ -f_1(\xbf|\thetabf_1) \right],
    \quad  f_1(\xbf|\thetabf_1) = \sum_{n=1}^N f_{1}(x_n|\thetabf_1),
\eeq
where $f_1(\cdot)$ is a separable penalty function, $\thetabf_1$ is a parameter vector and
$Z_1(\thetabf_1)$ is a normalization constant.  With some abuse of notation,
we have used $f_1(\cdot)$ for the function on the vector $\xbf$ and its components $x_n$.
Since $f_1(\xbf|\thetabf_1)$ is separable,  $\xbf$ has i.i.d.\ components
conditioned on $\thetabf_1$.
The likelihood function under the AWGN model \eqref{eq:yAx} can be written as
\beq \label{eq:pyxf}
    p(\ybf|\xbf,\theta_2)
    := \frac{1}{Z_2(\theta_2)}
    \exp\left[-f_2(\xbf,\ybf|\theta_2)\right], \quad
    f_2(\xbf,\ybf|\theta_2)
    := \frac{\theta_2}{2}\|\ybf-\Abf\xbf\|^2,
\eeq
where $Z_2(\theta_2) = (2\pi/\theta_2)^{N/2}$.
The joint density of $\xbf,\ybf$ given parameters
$\thetabf=(\thetabf_1,\theta_2)$ is then
\beq \label{eq:pxy}
    p(\xbf,\ybf|\thetabf) = p(\xbf|\thetabf_1)p(\ybf|\xbf,\theta_2).
\eeq
The problem is to estimate the parameters $\thetabf$ along with
the vector $\xbf^0$.

\begin{algorithm}[t]
\caption{Adaptive VAMP}
\begin{algorithmic}[1]  \label{algo:em-vamp}
\REQUIRE{Matrix $\Abf\in\R^{M\times N}$, measurement vector $\ybf$,
denoiser function $\gbf_1(\cdot)$, statistic function $\phi_1(\cdot)$,
adaptation function $T_1(\cdot)$ and number of iterations $N_{\rm it}$.  }
\STATE{ Select initial $\rbf_{10}$, $\gamma_{10}\geq 0$, $\thetabfhat_{10}$, $\thetahat_{20}$.}
\FOR{$k=0,1,\dots,N_{\rm it}-1$}

    \STATE{// Input denoising }
    \STATE{$\xbfhat_{1k} = \gbf_1(\rbf_{1k},\gamma_{1k},\thetabfhat_{1k}))$,
        \qquad
        $\eta_{1k}^{-1} = \gamma_{1k}/\bkt{ \gbf_1'(\rbf_{1k},\gamma_{1k},\thetabfhat_{1k})}$}
        \label{line:x1}
    \STATE{$\gamma_{2k} = \eta_{1k} - \gamma_{1k}$}
        \label{line:gam2}
    \STATE{$\rbf_{2k} = (\eta_{1k}\xbfhat_{1k} - \gamma_{1k}\rbf_{1k})/\gamma_{2k}$}
        \label{line:r2}
    \STATE{ }

    \STATE{// Input parameter update }
    \STATE{ $\thetabfhat_{1,\kp1} = T_1(\mu_{1k})$,
        \qquad
        $\mu_{1k} = \bkt{\phi_1(\rbf_{1k},\gamma_{1k},\thetabfhat_{1k})}$ }
        \label{line:theta1}
    \STATE{}

    \STATE{// Output estimation }
    \STATE{$\xbfhat_{2k} = \Qbf_k^{-1}( \thetahat_{2k}\Abf\tran\ybf + \gamma_{2k}\rbf_{2k})$,
        \qquad
        $\Qbf_k = \thetahat_{2k}\Abf\tran\Abf + \gamma_{2k}\Ibf$}
        \label{line:x2}
    \STATE{$\eta_{2k}^{-1} = (1/N)\Tr(\Qbf_k^{-1})$}
        \label{line:eta2}
    \STATE{$\gamma_{1,\kp1} = \eta_{2k} - \gamma_{2k}$}
        \label{line:gam1}
    \STATE{$\rbf_{1,\kp1} = (\eta_{2k}\xbfhat_{2k} - \gamma_{2k}\rbf_{2k})/\gamma_{1,\kp1}$}
        \label{line:r1}
    \STATE{}

    \STATE{// Output parameter update }
    \STATE{ $\thetahat_{2,\kp1}^{-1} =
        (1/N)\{\|\ybf-\Abf\xbfhat_{2k}\|^2 + \Tr(\Abf\Qbf_k^{-1}\Abf\tran )\}$ }
        \label{line:theta2}
\ENDFOR
\end{algorithmic}
\end{algorithm}

The steps of the proposed adaptive VAMP algorithm
that performs this estimation are shown in Algorithm~\ref{algo:em-vamp}.
Adaptive VAMP is a generalization of the EM-VAMP method in \cite{fletcher2016emvamp}.
At each iteration $k$, the algorithm produces, for $i=1,2$,
estimates $\thetabfhat_{ik}$ of the parameter $\thetabf_i$, along with estimates $\xbfhat_{ik}$
of the vector $\xbf^0$.  The algorithm is tuned by selecting three key functions:
(i) a \emph{denoiser function} $\gbf_1(\cdot)$;
(ii) an \emph{adaptation statistic} $\phi_1(\cdot)$; and
(iii) a \emph{parameter selection function} $T_1(\cdot)$.
The denoiser is used to produce the estimates $\xbfhat_{1k}$, while the adaptation statistic
and parameter estimation functions produce the estimates $\thetabfhat_{1k}$.

\paragraph*{Denoiser function}
The denoiser function $\gbf_1(\cdot)$ is discussed in detail in \cite{rangan2016vamp}
and is generally based on the prior $p(\xbf|\thetabf_1)$.  In the original EM-VAMP
algorithm~\cite{fletcher2016emvamp},
$\gbf_1(\cdot)$ is selected as the so-called minimum mean-squared error (MMSE) denoiser.
Specifically, in each iteration, the variables $\rbf_i$, $\gamma_i$ and $\thetabfhat_i$
were used to construct \emph{belief estimates},
\beq \label{eq:bidef}
    b_i(\xbf|\rbf_i,\gamma_i,\thetabfhat_i) \propto \exp\left[ -f_i(\xbf,\ybf|\thetabfhat_i) -
        \frac{\gamma_i}{2}\|\xbf-\rbf_i\|^2 \right],
\eeq
which represent estimates of the posterior density $p(\xbf|\ybf,\thetabf)$.
To keep the notation symmetric, we have written
$f_1(\xbf,\ybf|\thetabfhat_1)$ for $f_1(\xbf|\thetabfhat_1)$ even though the first penalty function
does not depend on $\ybf$.  The EM-VAMP method then selects $\gbf_1(\cdot)$ to be the
mean of the belief estimate,
\beq \label{eq:gmmse}
    \gbf_1(\rbf_1,\gamma_1,\thetabf_1) := \Exp\left[ \xbf |\rbf_1,\gamma_1,\thetabf_1 \right].
\eeq
For line~\ref{line:x1} of Algorithm~\ref{algo:em-vamp}, we define
$[\gbf_1'(\rbf_{1k},\gamma_{1k},\thetabf_1)]_n
:= \partial [\gbf_{1}(\rbf_{1k},\gamma_{1k},\thetabf_1)]_n/\partial r_{1n}$
and we use $\bkt{\cdot}$ for the empirical mean of a vector,
i.e., $\bkt{\ubf} = (1/N)\sum_{n=1}^N u_n$.
Hence, $\eta_{1k}$ in line~\ref{line:x1} is a scaled inverse divergence.
It is shown in \cite{rangan2016vamp} that, for the MMSE denoiser
\eqref{eq:gmmse}, $\eta_{1k}$ is the inverse average posterior variance.

\paragraph{Estimation for $\thetabf_1$ with finite statistics}
For the EM-VAMP algorithm~\cite{fletcher2016emvamp},
the parameter update for $\thetabfhat_{1,\kp1}$ is performed via a maximization
\beq \label{eq:thetaEM}
    \thetabfhat_{1,\kp1} = \argmax_{\thetabf_1} \Exp\left[ \ln p(\xbf|\thetabf_1)
        \left| \rbf_{1k},\gamma_{1k},\thetabfhat_{1k} \right. \right],
\eeq
where the expectation is with respect to the belief estimate $b_i(\cdot)$ in
\eqref{eq:bidef}.  It is shown in~\cite{fletcher2016emvamp} that using \eqref{eq:thetaEM}
is equivalent to approximating the M-step in the standard EM method.
In the adaptive VAMP method in Algorithm~\ref{algo:em-vamp}, the M-step maximization
\eqref{eq:thetaEM} is replaced by line~\ref{line:theta1}.  Note that line~\ref{line:theta1}
again uses $\bkt{\cdot}$ to denote empirical average,
\beq
    \mu_{1k} = \bkt{\phibf_1(\rbf_{1k},\gamma_{1k},\thetabfhat_{1k})}
        := \frac{1}{N} \sum_{n=1}^N \phi_1(r_{1k,n},\gamma_{1k},\thetabfhat_{1k}) \in \R^d,
\eeq
so $\mu_{1k}$ is the empirical average of some $d$-dimensional
statistic $\phi_1(\cdot)$ over the components
of $\rbf_{1k}$.  The parameter estimate update $\thetabfhat_{1,\kp1}$ is then computed from some
function of this statistic, $T_1(\mu_{1k})$.

We show in Appendix~\ref{sec:finite} that there are two important cases where the EM update
\eqref{eq:thetaEM} can be computed from a finite-dimensional statistic as in line~\ref{line:theta1}:
(i) The prior $p(\xbf|\thetabf_1)$ is given by an exponential family,
$f_1(\xbf|\thetabf_1) = \thetabf_1\tran \varphi(\xbf)$ for some sufficient statistic $\varphi(\xbf)$;
and (ii) There are a finite number of values for the parameter $\thetabf_1$.
For other cases, we can approximate more general parameterizations via discretization of the
parameter values $\thetabf_1$.  The updates in line~\ref{line:theta1} can also incorporate other
types of updates as we will see below.
But, we stress that it is preferable to compute the estimate for $\thetabf_1$ directly from the maximization
\eqref{eq:thetaEM}~-- the use of a finite-dimensional statistic is for the sake of analysis.

\paragraph{Estimation for $\theta_2$ with finite statistics}
It will be useful to also write the adaptation of $\theta_2$
in line~\ref{line:theta2} of Algorithm~\ref{algo:em-vamp}
in a similar form as line~\ref{line:theta1}.
First, take a singular value decomposition (SVD) of $\Abf$ of the form
\beq \label{eq:ASVD}
    \Abf=\Ubf\Sbf\Vbf\tran, \quad \Sbf = \Diag(\sbf),
\eeq
and define the transformed error and transformed noise,
\beq \label{eq:qerrdef}
    \qbf_k := \Vbf\tran(\rbf_{2k}-\xbf^0), \quad \xibf := \Ubf\tran\wbf.
\eeq
Then, it is shown in Appendix~\ref{sec:theta2TPf} that $\thetahat_{2,\kp1}$ in line~\ref{line:theta2}
can be written as
\beq \label{eq:theta2T}
    \thetahat_{2,\kp1} = T_2(\mu_{2k}) := \frac{1}{\mu_{2k}},
    \quad \mu_{2k} = \bkt{\phibf_2(\qbf_2,\xibf,\sbf,\gamma_{2k},\thetahat_{2k})}
\eeq
where
\beq \label{eq:phi2def}
    \phi_2(q,\xi,s,\gamma_2,\thetahat_2) := \frac{\gamma_2^2}{(s^2\thetahat_2+\gamma_2)^2}(sq + \xi)^2
        + \frac{s^2}{s^2\thetahat_2+\gamma_2}.
\eeq
Of course, we cannot directly compute $\qbf_k$ in \eqref{eq:qerrdef} since we do not know
the true $\xbf^0$.  Nevertheless, this form will be useful for analysis.

\section{State Evolution in the Large System Limit}

\subsection{Large System Limit} \label{sec:lsl}

Similar to the analysis of VAMP in \cite{rangan2016vamp},
we analyze Algorithm~\ref{algo:em-vamp} in a certain large-system limit (LSL).
The LSL framework was developed by Bayati and Montanari in \cite{BayatiM:11} and
we review some of the key definitions in Appendix~\ref{sec:empConv}.
As in the analysis of VAMP, the LSL considers a sequence of problems indexed by the vector
dimension $N$.  For each $N$, we assume that there is a ``true'' vector $\xbf^0\in\R^N$
that is observed through measurements of the form
\beq \label{eq:yAxslr}
    \ybf = \Abf\xbf^0 + \wbf \in \R^N, \quad \wbf \sim \Norm(\mathbf{0}, \theta_2^{-1}\Ibf_N),
\eeq
where $\Abf\in\R^{N\times N}$ is a known transform, $\wbf$ is white Gaussian noise with ``true'' precision $\theta_2$.  The noise precision $\theta_2$ does not change with $N$.

Identical to \cite{rangan2016vamp}, the transform $\Abf$
is modeled as a large, \emph{right-orthogonally invariant} random matrix.
Specifically, we assume that it has an SVD of the form \eqref{eq:ASVD},
where $\Ubf$ and $\Vbf$ are $N\times N$ orthogonal matrices
such that $\Ubf$ is deterministic and
$\Vbf$ is Haar distributed (i.e.\ uniformly distributed on the set of orthogonal matrices).
As described in \cite{rangan2016vamp},
although we have assumed a square matrix $\Abf$, we can consider general rectangular $\Abf$
by adding zero singular values.

Using the definitions in Appendix~\ref{sec:empConv},
we assume that the components of the singular-value vector $\sbf\in\R^N$ in \eqref{eq:ASVD}
converge empirically with second-order moments as
\beq \label{eq:Slim}
    \lim_{N \arr \infty} \{ s_n \} \PLeq S,
\eeq
for some non-negative random variable $S$ with $\Exp[S] > 0$ and $S \in [0,S_{\rm max}]$
for some finite maximum value $S_{\rm max}$.
Additionally, we assume that
the components of the true vector, $\xbf^0$, and the initial input to the denoiser, $\rbf_{10}$,
converge empirically as
\beq \label{eq:RX0lim}
    \lim_{N \arr \infty} \{ (r_{10,n}, x^0_n) \} \PLeq (R_{10},X^0), \quad
    R_{10} = X^0 + P_0, \quad P_0 \sim \Norm(0,\tau_{10}),
\eeq
where $X^0$ is a random variable representing the \emph{true
distribution} of the components $\xbf^0$; $P_0$ is an initial error and $\tau_{10}$ is an initial error variance.
The variable $X^0$ may be distributed as $X^0 \sim p(\cdot|\thetabf_1)$ for some
true parameter $\thetabf_1$.  However, in order to incorporate under-modeling, the existence of such a true
parameter is not required.
We also assume that the initial second-order term and parameter estimate converge almost surely as
\beq \label{eq:gam10lim}
    \lim_{N \arr \infty} (\gamma_{10},\thetabfhat_{10},\thetahat_{20})
        = (\gammabar_{10},\thetabfbar_{10},\thetabar_{20})
\eeq
for some $\gammabar_{10} > 0$ and $(\thetabfbar_{10},\thetabar_{20})$.

\subsection{Error and Sensitivity Functions}
We next need to introduce parametric forms of
two key terms from \cite{rangan2016vamp}:  error functions and sensitivity functions.
The error functions describe MSE
of the denoiser and output estimators under AWGN measurements.
Specifically, for the denoiser $g_1(\cdot,\gamma_1,\thetabfhat_1)$, we define the error function as
\begin{align}
    \Ecal_1(\gamma_1,\tau_1,\thetabfhat_1)
    := \Exp\left[ (g_1(R_1,\gamma_1,\thetabfhat_1)-X^0)^2 \right], \quad
     R_1 = X^0 + P, \quad P \sim \Norm(0,\tau_1), \label{eq:eps1}
\end{align}
where $X^0$ is distributed according to the true distribution of the components $\xbf^0$ (see above).
The function $\Ecal_1(\gamma_1,\tau_1,\thetabfhat_1)$ thus represents the MSE of the
estimate $\Xhat = g_1(R_1,\gamma_1,\thetabfhat_1)$ from a measurement $R_1$
corrupted by Gaussian noise of variance $\tau_1$ under the parameter estimate $\thetabfhat_1$.
For the output estimator, we define the error function as
\begin{align}
    \MoveEqLeft \Ecal_2(\gamma_2,\tau_2,\thetahat_2)
    := \lim_{N \arr \infty}
        \frac{1}{N} \Exp \| \gbf_2(\rbf_2,\gamma_2,\thetahat_2) -\xbf^0 \|^2, \nonumber \\
    & \xbf^0 = \rbf_2 + \qbf, \quad \qbf \sim \Norm(0,\tau_2 \Ibf), \quad
    \ybf = \Abf\xbf^0 + \wbf, \quad \wbf \sim \Norm(0,\theta_2^{-1} \Ibf),
    \label{eq:eps2}
\end{align}
which is the average per component error of the vector estimate under Gaussian noise.
The dependence on the true noise precision, $\theta_2$, is suppressed.

The sensitivity functions describe the expected divergence of the estimator.
For the denoiser, the sensitivity function is defined as
\begin{align}
    A_1(\gamma_1,\tau_1,\thetabfhat_1)
    := \Exp\left[ g_1'(R_1,\gamma_1,\thetabfhat_1) \right], \quad
     R_1 = X^0 + P, \quad P \sim \Norm(0,\tau_1), \label{eq:sens1}
\end{align}
which is the average derivative under a Gaussian noise input.  For the
output estimator, the sensitivity is defined as
\begin{align}
     A_2(\gamma_2,\tau_2,\thetahat_2)
    := \lim_{N \arr \infty}
        \frac{1}{N} \Tr\left[ \frac{\partial \gbf_2(\rbf_2,\gamma_2,\thetahat_2)}{\partial \rbf_2}
            \right],
\end{align}
where $\rbf_2$ is distributed as in \eqref{eq:eps2}.
The paper \cite{rangan2016vamp} discusses the error and sensitivity functions in detail
and shows how these functions can be easily evaluated.

\subsection{State Evolution Equations}
We can now describe our main result, which are the SE equations for Adaptive VAMP\@.
The equations are an extension of those in the VAMP paper \cite{rangan2016vamp},
with modifications for the parameter estimation.
For a given iteration $k \geq 1$, consider the set of components,
\[
     \{ (\xhat_{1k,n},r_{1k,n},x^0_n), ~ n=1,\ldots,N \}.
\]
This set represents the components of the true vector $\xbf^0$,
its corresponding estimate $\xbfhat_{1k}$ and the denoiser input
$\rbf_{1k}$.  We will show that, under certain assumptions,
these components converge empirically as
\beq \label{eq:limrx1}
    \lim_{N \arr \infty} \{ (\xhat_{1k,n},r_{1k,n},x^0_n) \}
    \PLeq (\Xhat_{1k},R_{1k},X^0),
\eeq
where the random variables $(\Xhat_{1k},R_{1k},X^0)$ are given by
\beq \label{eq:RX0var}
    R_{1k} = X^0 + P_k, \quad P_k \sim \Norm(0,\tau_{1k}), \quad
    \Xhat_{1k} = g_1(R_{1k},\gammabar_{1k},\thetabfbar_{1k}),
\eeq
for constants $\gammabar_{1k}$, $\thetabfbar_{1k}$ and $\tau_{1k}$ that will be defined below.
We will also see that $\thetabfhat_{1k} \arr \thetabfbar_{1k}$, so $\thetabfbar_{1k}$ represents the
asymptotic parameter estimate.
The model \eqref{eq:RX0var} shows that each component $r_{1k,n}$ appears as the true component $x^0_n$ plus
Gaussian noise.  The corresponding estimate $\xhat_{1k,n}$ then appears as the
denoiser output with $r_{1k,n}$ as the input and $\thetabfbar_{1k}$ as the parameter estimate.
Hence, the asymptotic behavior
of any component $x^0_n$ and its corresponding $\xhat_{1k,n}$ is identical to
a simple scalar system.  We will refer to \eqref{eq:limrx1}-\eqref{eq:RX0var} as the denoiser's \emph{scalar equivalent model}.

We will also show that these transformed errors $\qbf_k$ and noise $\xibf$ in \eqref{eq:qerrdef}
and singular values $\sbf$ converge
empirically to a set of independent random variables $(Q_k,\Xi,S)$ given by
\beq \label{eq:limqxi}
    \lim_{N \arr \infty} \{ (q_{k,n},\xi_n,s_n) \}
    \PLeq (Q_k,\Xi,S), \quad
    Q_k \sim \Norm(0,\tau_{2k}), \quad \Xi \sim \Norm(0,\theta_2^{-1}),
\eeq
where $S$ has the distribution of the singular values of $\Abf$,
$\tau_{2k}$ is a variance that will be defined below and $\theta_2$
is the true noise precision in the measurement model \eqref{eq:yAxslr}.
All the variables in \eqref{eq:limqxi} are independent.
Thus \eqref{eq:limqxi} is a scalar equivalent model for the output estimator.

The variance terms are defined recursively through the \emph{state evolution}
equations,
\begin{subequations} \label{eq:se}
\begin{align}
    \alphabar_{1k} &= A_1(\gammabar_{1k},\tau_{1k},\thetabfbar_{1k}), \quad
    \etabar_{1k} = \frac{\gammabar_{1k}}{\alphabar_{1k}}, \quad
    \gammabar_{2k} = \etabar_{1k} - \gammabar_{1k} \label{eq:eta1se} \\\
    \thetabfbar_{1,\kp1} &= T_1(\mubar_{1k}), \quad \mubar_{1k} =
        \Exp\left[ \phi_1(R_{1k},\gammabar_{1k},\thetabfbar_{1k}) \right] \label{eq:theta1se} \\
    \tau_{2k} &= \frac{1}{(1-\alphabar_{1k})^2}\left[
        \Ecal_1(\gammabar_{1k},\tau_{1k},\thetabfbar_{1k}) - \alphabar_{1k}^2\tau_{1k} \right],
            \label{eq:tau2se} \\
    \alphabar_{2k} &= A_2(\gammabar_{2k},\tau_{2k},\thetabar_{2k}), \quad
    \etabar_{2k} = \frac{\gammabar_{2k}}{\alphabar_{2k}}, \quad
    \gammabar_{1,\kp1} = \etabar_{2k} - \gammabar_{2k} \label{eq:eta2se} \\
    \thetabar_{2,\kp1} &= T_2(\mubar_{2k}), \quad \mubar_{2k} =
        \Exp\left[ \phi_2(Q_{k},\Xi,S,\gammabar_{2k},\thetabar_{2k}) \right] \label{eq:theta2se} \\
    \tau_{1,\kp1} &= \frac{1}{(1-\alphabar_{2k})^2}\left[
        \Ecal_2(\gammabar_{2k},\tau_{2k}) - \alphabar_{2k}^2\tau_{2k} \right],
            \label{eq:tau1se}
\end{align}
\end{subequations}
which are initialized with $\tau_{10} = \Exp[(R_{10}-X^0)^2]$ and the
$(\gammabar_{10},\thetabfbar_{10},\thetabar_{20})$
defined from the limit \eqref{eq:gam10lim}.  The expectation in \eqref{eq:theta1se} is
with respect to the random variables \eqref{eq:limrx1} and the expectation in \eqref{eq:theta2se} is
with respect to the random variables \eqref{eq:limqxi}.

\begin{theorem} \label{thm:em-se}
Consider the outputs of Algorithm~\ref{algo:em-vamp}.
Under the above assumptions and definitions, assume additionally that for all iterations $k$:
\begin{enumerate}[(i)]
\item The solution $\alphabar_{1k}$ from the SE equations \eqref{eq:se} satisfies
$\alphabar_{1k} \in (0,1)$.

\item The functions $A_i(\cdot)$, $\Ecal_i(\cdot)$ and $T_i(\cdot)$
are continuous at $(\gamma_i,\tau_i,\thetabfhat_i,\mu_i)=(\gammabar_{ik},\tau_{ik},\thetabfbar_{ik},\mubar_{ik})$.

\item The denoiser function $g_1(r_1,\gamma_1,\thetabfhat_1)$ and its derivative
 $g_1'(r_1,\gamma_1,\thetabfhat_1)$
are uniformly Lipschitz in $r_1$ at $(\gamma_1,\thetabfhat_1)=(\gammabar_{1k},\thetabfbar_{1k})$.
(See  Appendix~\ref{sec:empConv} for a precise definition of uniform Lipschitz continuity.)

\item The adaptation statistic $\phi_1(r_1,\gamma_1,\thetabfhat_1)$ is
 uniformly pseudo-Lipschitz of order 2 in $r_1$ at
 $(\gamma_1,\thetabfhat_1)=(\gammabar_{1k},\thetabfbar_{1k})$.
\end{enumerate}
Then, for any fixed iteration $k \geq 0$,
\beq \label{eq:aglim}
    \lim_{N \arr \infty} (\alpha_{ik},\eta_{ik},\gamma_{ik},\mu_{ik},\thetabfhat_{ik}) =
    (\alphabar_{ik},\etabar_{ik}, \gammabar_{ik},\mubar_{ik},\thetabfbar_{ik})
\eeq
almost surely.
In addition, the empirical limit \eqref{eq:limrx1} holds almost surely for all $k > 0$,
and \eqref{eq:limqxi} holds almost surely for all $k \geq 0$.
\end{theorem}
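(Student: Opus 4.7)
The plan is to establish all three claims simultaneously by a single induction on $k$, at each step reducing Adaptive VAMP to the standard VAMP state-evolution of \cite{rangan2016vamp} with the parameter arguments frozen at their almost-sure deterministic limits. The base case follows from the initialization assumptions \eqref{eq:RX0lim}-\eqref{eq:gam10lim}. For the inductive step, assume $(\gamma_{1k},\thetabfhat_{1k},\thetahat_{2k}) \arr (\gammabar_{1k},\thetabfbar_{1k},\thetabar_{2k})$ almost surely and that the scalar equivalent model \eqref{eq:limrx1}-\eqref{eq:RX0var} for $\rbf_{1k}$ holds almost surely.

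For the input-denoising half, I would use the uniform Lipschitz continuity of $g_1$ and $g_1'$ at $(\gammabar_{1k},\thetabfbar_{1k})$ (hypothesis (iii)) to show that replacing $(\gamma_{1k},\thetabfhat_{1k})$ by their deterministic limits incurs vanishing empirical error in the outputs $\xbfhat_{1k}$ and $\gbf_1'(\rbf_{1k},\gamma_{1k},\thetabfhat_{1k})$. This reduces the step to a standard VAMP iteration with a fixed pseudo-Lipschitz denoiser $g_1(\cdot,\gammabar_{1k},\thetabfbar_{1k})$, immediately yielding the joint empirical convergence of $(\xhat_{1k,n},r_{1k,n},x^0_n)$ and the limits $\alpha_{1k} \arr \alphabar_{1k}$, $\eta_{1k} \arr \etabar_{1k}$, $\gamma_{2k} \arr \gammabar_{2k}$ as in \cite{rangan2016vamp}. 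Hypothesis (iv) (uniform pseudo-Lipschitz continuity of $\phi_1$) then lets the same empirical-average machinery give $\mu_{1k} = \bkt{\phi_1(\rbf_{1k},\gamma_{1k},\thetabfhat_{1k})} \arr \mubar_{1k}$, and continuity of $T_1$ at $\mubar_{1k}$ (hypothesis (ii)) yields $\thetabfhat_{1,\kp1} \arr \thetabfbar_{1,\kp1}$. The standard Onsager-correction identity then propagates the scalar equivalent model from $\rbf_{1k}$ to $\rbf_{2k}$ with variance $\tau_{2k}$ as in \eqref{eq:tau2se}.

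For the output-estimation half, the crucial reformulation is \eqref{eq:theta2T}-\eqref{eq:phi2def}: the noise-precision update is an empirical average of a rational function of the SVD-rotated error $\qbf_k$, rotated noise $\xibf$, and singular values $\sbf$. The VAMP SE analysis, specialized to the Haar-distributed $\Vbf$, already establishes the joint empirical convergence \eqref{eq:limqxi} with $(Q_k,\Xi,S)$ independent, and this carries over unchanged because $\gbf_2$ in \eqref{eq:eps2} does not depend on $\thetabf_1$. Since $\phi_2$ in \eqref{eq:phi2def} is pseudo-Lipschitz of order $2$ in $(q,\xi)$ uniformly over $s \in [0,S_{\rm max}]$, this produces $\mu_{2k} \arr \mubar_{2k}$, and continuity of $T_2$ yields $\thetahat_{2,\kp1} \arr \thetabar_{2,\kp1}$. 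The remaining scalar-equivalent propagation to $\rbf_{1,\kp1}$ then closes the induction.

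The main obstacle is the coupling introduced by feeding the estimated parameter $\thetabfhat_{1k}$ into the denoiser: in plain VAMP the denoiser is a fixed function, and the conditional Gaussianity of its effective input is obtained from standard Haar conditioning arguments, whereas here $\thetabfhat_{1k}$ is a functional of the entire past of the algorithm, so a naive application of the VAMP SE would fail. Hypotheses (ii)-(iv) are calibrated precisely to dissolve this coupling through a two-step argument: freeze the parameter arguments at their deterministic limits, invoke the standard VAMP SE at those fixed parameters, and control the perturbation back to the true random parameters by uniform Lipschitz and pseudo-Lipschitz bounds against $\|\thetabfhat_{ik}-\thetabfbar_{ik}\|$, which vanishes almost surely by the inductive hypothesis. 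Hypothesis (i) that $\alphabar_{1k}\in(0,1)$ is used to ensure $\gammabar_{2k}>0$ so that the scalar-equivalent model for $\rbf_{2k}$ is well defined, mirroring the analogous non-degeneracy requirement in the VAMP analysis.
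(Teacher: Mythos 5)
Your proposal is correct and follows essentially the same route as the paper: the paper simply packages your induction as a standalone general convergence result (a minor extension of Theorem 4 of the VAMP paper in which $\mu_{ik}$ and $\thetabfhat_{ik}$ are carried as additional scalar state, whose convergence is proved exactly by your freeze-and-perturb step --- uniform (pseudo-)Lipschitz continuity converts empirical averages at the random parameters into expectations at their deterministic limits, and continuity of $T_i$ then propagates those limits) and then maps Algorithm~\ref{algo:em-vamp} onto that abstract recursion via the change of variables $\pbf_k=\rbf_{1k}-\xbf^0$, $\qbf_k=\Vbf\tran(\rbf_{2k}-\xbf^0)$, with $f_p,f_q,\phi_p,\phi_q$ as in \eqref{eq:fqpslr}--\eqref{eq:phiqpslr}. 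The one small inaccuracy is your remark that the output half ``carries over unchanged because $\gbf_2$ does not depend on $\thetabf_1$'': the output estimator does depend on the data-dependent $\thetahat_{2k}$, so the same uniform-Lipschitz treatment (valid because $S\in[0,S_{\rm max}]$ and $\gammabar_{2k},\thetabar_{2k}>0$) is needed there as well, exactly as you already apply it on the input side.
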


Theorem~\ref{thm:em-se} shows that, in the LSL, the parameter estimates $\thetabfhat_{ik}$ converge to
deterministic limits $\thetabfbar_{ik}$ that can be precisely predicted by the
state-evolution equations.  The SE equations incorporate the true distribution of
the components on the prior $\xbf^0$, the true noise precision $\theta_2$,
and the specific parameter estimation and denoiser functions used by the Adaptive VAMP method.
In addition, similar to the SE analysis of VAMP in \cite{rangan2016vamp},
the SE equations also predict the asymptotic joint distribution of $\xbf^0$ and
their estimates $\xbfhat_{ik}$.  This joint distribution can be used to measure
various performance metrics such as MSE~-- see \cite{rangan2016vamp}.
In this way, we have provided a rigorous and precise characterization of a class of adaptive
VAMP algorithms that includes EM-VAMP\@.

\section{Consistent Parameter Estimation with Variance Auto-Tuning} \label{sec:autotune}

By comparing the deterministic limits $\thetabfbar_{ik}$ with the true parameters $\thetabf_i$,
one can determine under which problem conditions
the parameter estimates of adaptive VAMP are asymptotically consistent.
In this section, we show with a particular choice of parameter estimation functions, one
can obtain provably
asymptotically consistent parameter estimates under suitable identifiability conditions.
We call the method \emph{variance auto-tuning}, which
generalizes the approach in \cite{KamRanFU:12-IT}.

\begin{definition} \label{def:ident1}  Let $p(\xbf|\thetabf_1)$ be a parameterized
set of densities.  Given a finite-dimensional statistic
$\phi_1(r)$, consider the mapping
\beq \label{eq:phi1map}
    (\tau_1,\thetabf_1) \mapsto \Exp\left[ \phi_1(R) | \tau_1, \thetabf_1 \right],
\eeq
where the expectation is with respect to the model \eqref{eq:RX0cons}.
We say the $p(\xbf|\thetabf_1)$ is \emph{identifiable} in Gaussian noise
if there exists a finite-dimensional statistic $\phi_1(r) \in \R^d$
such that (i) $\phi_1(r)$ is pseudo-Lipschitz continuous of order 2; and
(ii) the mapping \eqref{eq:phi1map} has a continuous inverse.
\end{definition}

\begin{theorem} \label{thm:theta1cons}
Under the assumptions of Theorem~\ref{thm:em-se},
suppose that $X^0$ follows $X^0 \sim p(\xbf|\thetabf_1^0)$ for some true parameter
$\thetabf_1^0$.  If $p(\xbf|\thetabf_1)$ is identifiable in Gaussian noise,
there exists an adaptation rule
such that, for any iteration $k$, the estimate $\thetabfhat_{1k}$ and noise estimate
$\tauhat_{1k}$ are
asymptotically consistent in that $\lim_{N \arr\infty} \thetabfhat_{1k} = \thetabf_1^0$
and $\lim_{N \arr\infty} \tauhat_{1k} = \tau_{1k}$ almost surely.
\end{theorem}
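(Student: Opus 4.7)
The plan is to construct the adaptation rule directly from the continuous inverse supplied by identifiability, so that the state-evolution recursion \eqref{eq:theta1se} collapses after a single step to the identity $\thetabfbar_{1k}=\thetabf_1^0$. First I would let $\phi_1(r)$ be the pseudo-Lipschitz statistic from Definition~\ref{def:ident1}, viewed as a function that ignores the dummy arguments $(\gamma_1,\thetabfhat_1)$; then assumption (iv) of Theorem~\ref{thm:em-se} holds trivially. Writing $M(\tau_1,\thetabf_1) := \Exp[\phi_1(R)\mid\tau_1,\thetabf_1]$ for the mapping in \eqref{eq:phi1map} and $M^{-1}$ for its continuous inverse on the range of $M$, I would define the adaptation function $T_1(\mu) := \pi_{\thetabf_1}\bigl(M^{-1}(\mu)\bigr)$ and the auxiliary noise estimate $\tauhat_{1,\kp1} := \pi_{\tau_1}\bigl(M^{-1}(\mu_{1k})\bigr)$. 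Continuity of $M^{-1}$ then secures assumption (ii) of Theorem~\ref{thm:em-se}.

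Next I would analyze the deterministic SE under the standing hypothesis $X^0 \sim p(\cdot\mid\thetabf_1^0)$. Because $\phi_1$ does not depend on $(\gamma_1,\thetabfhat_1)$, \eqref{eq:RX0var} yields $\mubar_{1k}=\Exp[\phi_1(X^0+P_k)] = M(\tau_{1k},\thetabf_1^0)$, so applying $M^{-1}$ gives $M^{-1}(\mubar_{1k}) = (\tau_{1k},\thetabf_1^0)$ and hence $\thetabfbar_{1,\kp1} = \thetabf_1^0$, irrespective of the initial value $\thetabfbar_{10}$. Iterating, $\thetabfbar_{1k}=\thetabf_1^0$ for every $k\geq 1$, while the $\tau_1$-coordinate of $M^{-1}(\mubar_{1k})$ matches the SE variance $\tau_{1k}$ exactly.

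The final step is to transfer these deterministic identities to the finite-$N$ algorithm via Theorem~\ref{thm:em-se}. That theorem guarantees $\mu_{1k}\to\mubar_{1k}$ almost surely, and continuity of $M^{-1}$ upgrades this to $M^{-1}(\mu_{1k})\to(\tau_{1k},\thetabf_1^0)$ almost surely; reading off coordinates gives $\thetabfhat_{1,\kp1}\to\thetabf_1^0$ and $\tauhat_{1,\kp1}\to\tau_{1k}$ almost surely, which is the desired consistency.

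The main obstacle is structural rather than analytical: identifiability does the heavy lifting, and what requires care is the self-consistent fixed-point structure of the SE. Concretely, one must verify that letting $\phi_1$ ignore $(\gamma_1,\thetabfhat_1)$ does not weaken any hypothesis of Theorem~\ref{thm:em-se}, and that the quantities $(\gammabar_{1k},\tau_{1k})$ driving the denoiser branch of the SE are unaffected by the particular choice of $\thetabfbar_{10}$, so that the induction $\thetabfbar_{1k}\equiv\thetabf_1^0$ is legitimate from $k=1$ onward. Assumptions (i) and (iii) of Theorem~\ref{thm:em-se} are inherited verbatim from the standing hypotheses, so no additional work is needed there.
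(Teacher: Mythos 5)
Your proposal is correct and follows essentially the same route as the paper: take the identifiability statistic as the adaptation statistic, take the continuous inverse of the map $(\tau_1,\thetabf_1)\mapsto\Exp[\phi_1(R)\mid\tau_1,\thetabf_1]$ as the adaptation rule, use the SE limit $\mu_{1k}\to\mubar_{1k}=\Exp[\phi_1(R)\mid\tau_{1k},\thetabf_1^0]$ from Theorem~\ref{thm:em-se}, and invert to recover $(\tau_{1k},\thetabf_1^0)$. The only cosmetic differences are that the paper phrases the rule as a joint ML estimate of $(\tau_1,\thetabf_1)$ from $\rbf_{1k}$ performed at the start of each iteration (so consistency is stated at index $k$ rather than $k+1$), and your worry about $(\gammabar_{1k},\tau_{1k})$ depending on $\thetabfbar_{10}$ is immaterial, since the inversion identity $T_1(M(\tau_{1k},\thetabf_1^0))=(\tau_{1k},\thetabf_1^0)$ holds for whatever $\tau_{1k}$ the SE produces.
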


The theorem is proved in Appendix~\ref{sec:autotuneDetails} which also provides
details on how to perform the adaptation.
The Appendix also provides a similar result for
consistent estimation of the noise precision $\theta_2$.
The result is remarkable as it shows that a simple variant
of EM-VAMP can provide provably consistent parameter estimates under extremely general
distributions.

\section{Numerical Simulations}

\paragraph*{Sparse signal recovery}
We consider a sparse linear inverse problem of
estimating a vector $\xbf$ from measurements $\ybf$ from \eqref{eq:yAx} without knowing the signal parameters $\thetabf_1$ or the noise precision $\theta_2>0$.
The paper \cite{fletcher2016emvamp} presented several
numerical experiments to assess the performance of EM-VAMP relative
to other methods.
To support the results in this paper,
our goal is to demonstrate that state evolution can correctly predict
the performance of EM-VAMP, and to
validate the consistency of EM-VAMP with auto-tuning.
Details are given in Appendix~\ref{sec:sim}.
Briefly, to model the sparsity, $\xbf$ is drawn as an i.i.d.\
Bernoulli-Gaussian (i.e., spike and slab) prior with unknown sparsity level, mean and variance.
The true sparsity is $\beta_x=0.1$.
Following \cite{RanSchFle:14-ISIT,Vila:ICASSP:15}, we take $\Abf \in \R^{M \x N}$ to be a random right-orthogonally invariant matrix with dimensions under $M=512$, $N=1024$
with the condition number set to  $\kappa = 100$ (high condition number
matrices are known to be problem for conventional AMP methods).  The left panel of Fig.~\ref{fig:sim} shows the normalized mean square error (NMSE) for various algorithms.
Appendix~\ref{sec:sim} describes the algorithms in details and also shows similar results for $\kappa=10$.

\begin{figure}
\centering
\includegraphics[width=0.45\columnwidth]{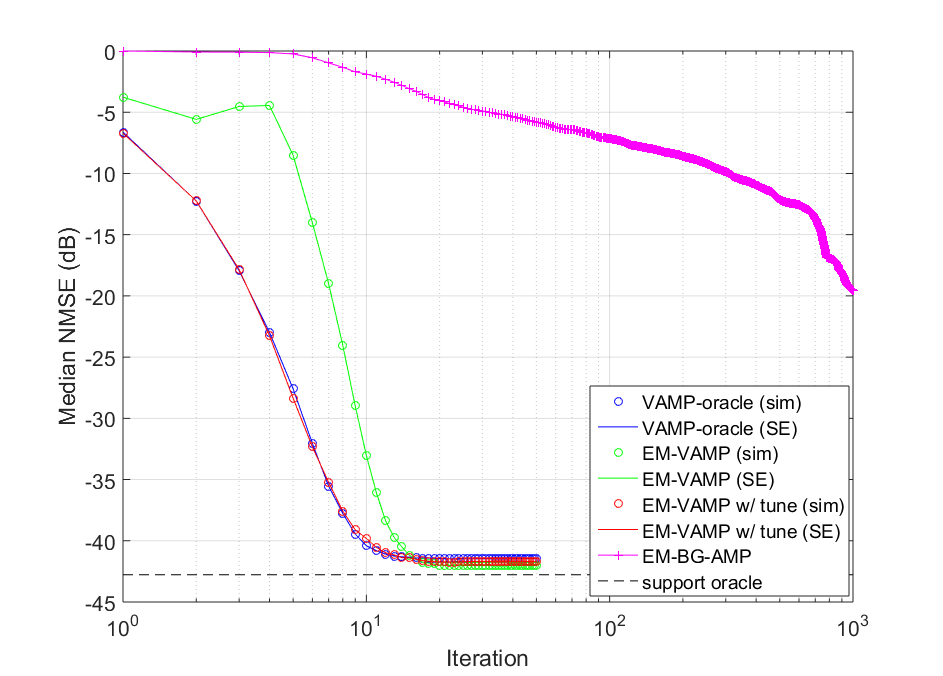}
\hfill
\includegraphics[width=0.45\columnwidth]{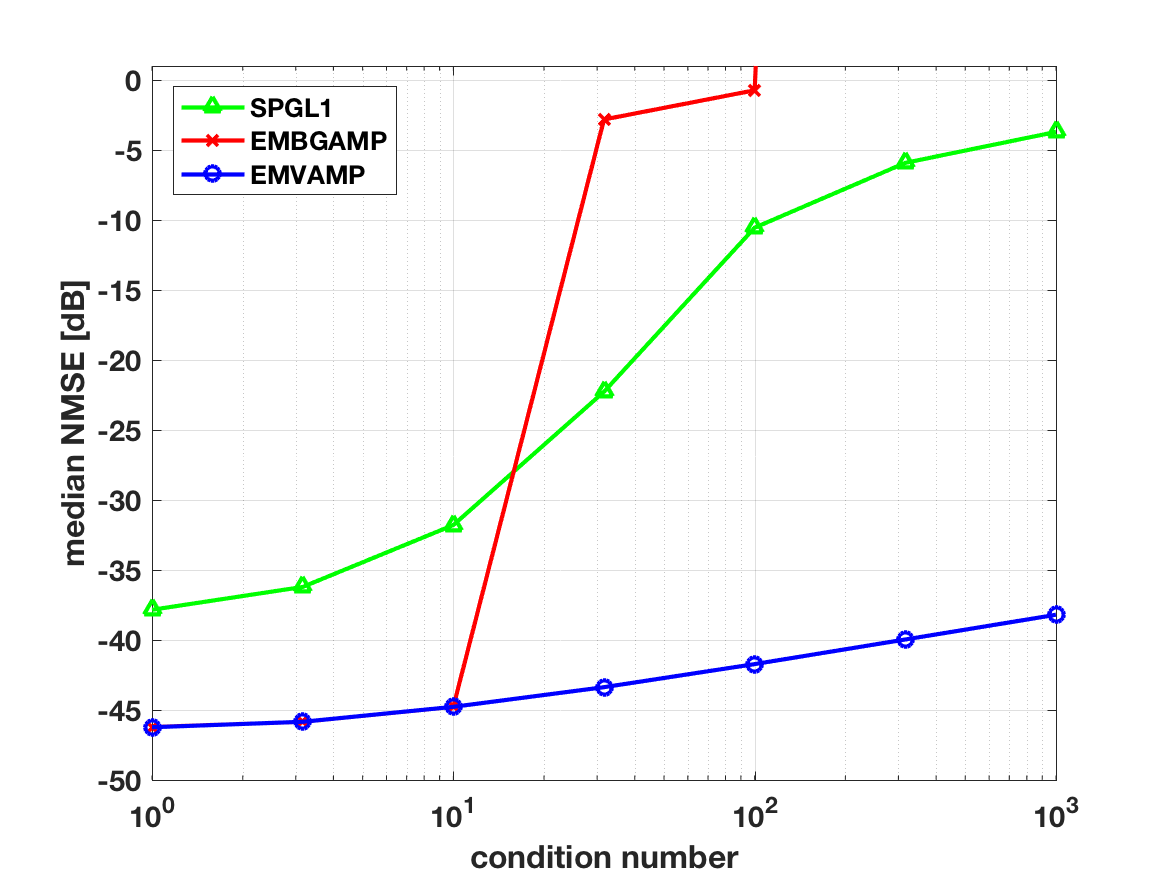}
\caption{Numerical simulations.  Left panel:  Sparse signal recovery:  NMSE versus iteration for condition number for a random matrix
with a condition number $\kappa=100$. Right panel:  NMSE for sparse image recovery (in AWGN at 40~dB SNR) as a function of the condition number $\kappa$.}
\label{fig:sim}
\end{figure}

We see several important features.
First, for all variants of VAMP and EM-VAMP, the SE equations
provide an excellent prediction of the per iteration performance of the algorithm.
Second, consistent with the simulations in \cite{rangan2016vamp},
the oracle VAMP converges remarkably fast ($\sim$ 10 iterations).
Third, the performance of EM-VAMP with
auto-tuning is virtually indistinguishable from oracle VAMP, suggesting that the parameter estimates
are near perfect from the very first iteration.  Fourth,
the EM-VAMP method performs initially worse than the oracle-VAMP, but these errors are exactly
predicted by the SE\@.  Finally, all the VAMP and EM-VAMP algorithms exhibit much faster convergence
than the EM-BG-AMP\@.  In fact, consistent with observations in \cite{fletcher2016emvamp},
EM-BG-AMP begins to diverge at higher condition numbers.  In contrast, the VAMP algorithms are
stable.

\paragraph*{Compressed sensing image recovery}
While the theory is developed on theoretical signal priors, we demonstrate
that the proposed EM-VAMP algorithm can be effective on natural images.  Specifically,
we repeat the experiments in
\cite{vila2014empirical} for recovery of a sparse image.  Again, see Appendix~\ref{sec:sim}
for details including a picture of the image and the various reconstructions.
An $N=256\x 256$ image of a satellite with $K=6678$ non-zero pixels is observed
through a linear transform $\Abf=\diag(\sbf)\Pbf\Hbf\diag(\dbf)$ in AWGN at 40~dB SNR,
where
$\Hbf$ is the fast Hadamard transform,
$\Pbf$ is a random sub-selection to $M<N$ measurements,
$\sbf$ is a scaling to adjust the condition number, and
$\dbf$ is a random $\pm 1$ vector.
As in the previous example,
the image vector $\xbf$ is modeled as sparse Bernoulli-Gaussian and the EM-VAMP
algorithm is used to estimate the sparsity ratio, signal variance, and noise variance.
The transform is set to have measurement ratio $M/N = 0.5$.
We see from the right panel of Fig.~\ref{fig:sim} that the EM-VAMP algorithm
is able to reconstruct the images with improved performance over the standard basis pursuit denoising
method SPGL1 \cite{van2008probing} and the EM-BG-GAMP method from \cite{vila2013expectation}.

\section{Conclusions}  Due to its analytic tractability, computational simplicity,
and potential for Bayes optimal inference, VAMP is a promising technique
for statistical linear inverse problems.
Importantly, it does not suffer the convergence issues of standard AMP methods
when $\Abf$ is ill conditioned.
However, a key challenge in using VAMP and related
methods is the need to precisely specify the
problem parameters.
This work provides a rigorous foundation for analyzing VAMP in combination with various
parameter adaptation techniques including EM\@.  The analysis reveals that VAMP,
with appropriate tuning, can also provide consistent parameter estimates under very
general settings, thus yielding a powerful approach for statistical linear inverse problems.

\appendix

\section{Adaptation with Finite-Dimensional Statistics} \label{sec:finite}

We provide two simple examples where the EM parameter update \eqref{eq:theta1ML}
can be computed from finite-dimensional statistics.

\paragraph*{Finite number of parameter values}
Suppose that $\thetabf_1$ takes only a finite number of values
$\thetabf_1 \in \{ \thetabf_1^1,\ldots,\thetabf_1^L\}$.
Define the statistics,
\[
    \phi_\ell(r_1,\gamma_1,\thetabfhat_1) := \Exp\left[ \ln p(x|\theta_1^\ell)
        | r_1,\gamma_1,\thetahat_1 \right],
\]
where the outer expectation is with respect to the belief estimate in
$b_1(x|r_1,\gamma_1,\thetahat_1)$ given in \eqref{eq:bidef}.
Hence, the EM update in \eqref{eq:theta1ML} is given by
$\thetabfhat_{1,\kp1} = \thetabf_1^{\ell^*}$ where
\[
    \ell^* = \argmax_\ell \mu_{k,\ell}, \quad
    \mu_{\ell} = \frac{1}{N}\sum_{n=1}^N \phi_\ell(r_{1k,n},\gamma_{1k},\thetabfhat_{1k}).
\]

\paragraph*{Exponential family}  Suppose that $p(\xbf|\thetabf_1)$ is given by
an exponential family where the penalty function in \eqref{eq:pxf}
is given by $f_1(x,\thetabf_1) = \thetabf_1\tran\varphi(x)$ for some sufficient
statistics $\varphi(x) \in \R^d$.  Define the
statistic,
\[
    \phi_1(r_1,\gamma_1,\thetabfhat_1) := \Exp\left[ \varphi(x) | r_1,\gamma_1,\thetabfhat_1 \right].
\]
Then, the EM parameter update \eqref{eq:theta1ML} can be computed by
\[
    \thetabfhat_{1,\kp1} = \argmax_{\thetabf_1} \thetabf\tran \mu_1 - \frac{1}{N}\ln Z_1(\thetabf_1),
    \quad
    \mu_1 = \bkt{ \phi_1(\rbf_{1k},\gamma_{1k},\thetabfhat_{1k}) }.
\]
Hence, we see that the EM parameter update can be computed from a finite set of statistics.

\section{Proof of \eqref{eq:theta2T}} \label{sec:theta2TPf}
Using the SVD \eqref{eq:ASVD} and the equations for $\Qbf_k$ and $\xbfhat_{2k}$ in
line~\ref{line:x2} of Algorithm~\ref{algo:em-vamp}, we obtain
\begin{align*}
    \Ubf\tran(\ybf-\Abf\xbfhat_{2k}) = \Ubf\tran \ybf - \Sbf \xbfhat_{2k}
    = \Ubf\tran \diag\left[ \frac{\gamma_{2k}}{\sbf^2\thetahat_{2k} + \gamma_{2k}} \right]
        (\Sbf\qbf_{2k} + \xibf).
\end{align*}
Thus, the parameter estimate $\thetahat_{2,\kp1}$ in line~\ref{line:theta2} is given by
\begin{align*}
    \MoveEqLeft \thetahat_{2,\kp1}^{-1}
        \stackrel{(a)}{=}
        \frac{1}{N}\| \Ubf\tran(\ybf-\Abf\xbfhat_{2k}\|^2 + \frac{1}{N}\Tr(\Abf\tran\Qbf_k\Abf) \\
       &\stackrel{(b)}{=}
        \frac{1}{N}\| \diag\left[\frac{\gamma_{2k}}{\sbf^2\thetahat_{2k} + \gamma_{2k}}\right]
    (\Sbf\qbf_{2k} + \xibf)\|^2 + \frac{1}{N}
        \Tr\left[ \Sbf^2(\Sbf^2\thetahat_{2k} + \gamma_{2k}\Ibf)^{-1} \right],
\end{align*}
where (a) follows since $\Ubf$ is unitary and therefore,
$\|\ybf-\Abf\xbfhat_{2k}\|^2 = \|\Ubf\tran(\ybf-\Abf\xbfhat_{2k})\|^2$ and (b)
follows from the SVD \eqref{eq:ASVD} and some simple manipulations.
Therefore, if we define $\phi_2(\cdot)$ as in \eqref{eq:phi2def}, we obtain \eqref{eq:theta2T}.

\section{Convergence of Vector Sequences} \label{sec:empConv}
We review some definitions from the Bayati and Montanari paper \cite{BayatiM:11}, since we will use the same
analysis framework in this paper.
Fix a dimension $r > 0$, and suppose that, for each $N$,
$\xbf(N)$ is a block vector of the form
\[
    \xbf(N) = (\xbf_1(N),\ldots,\xbf_N(N)),
\]
where each component $\xbf_n(N) \in \R^r$.  Thus, the total dimension
of $\xbf(N)$ is $rN$.  In this case, we will say that
$\xbf(N)$ is a \emph{block vector sequence that scales with $N$
under blocks $\xbf_n(N) \in \R^r$.}
When $r=1$, so that the blocks are scalar, we will simply say that
$\xbf(N)$ is a \emph{vector sequence that scales with $N$}.
Such vector sequences can be deterministic or random.
In most cases, we will omit the notational dependence on $N$ and simply write $\xbf$.

Now, given $p \geq 1$,
a function $\fbf:\R^s \arr \R^r$ is called \emph{pseudo-Lipschitz of order $p$},
if there exists a constant $C > 0$ such that for all $\xbf_1,\xbf_2 \in\R^s$,
\[
    \|\fbf(\xbf_1)-\fbf(\xbf_2)\| \leq C\|\xbf_1-\xbf_2\|\left[ 1 + \|\xbf_1\|^{p-1}
    + \|\xbf_2\|^{p-1} \right].
\]
Observe that in the case $p=1$, pseudo-Lipschitz continuity reduces to
the standard Lipschitz continuity.

Given $p \geq 1$, we will say that the block vector sequence $\xbf=\xbf(N)$ converges
\emph{empirically with $p$-th order moments} if there exists a random variable
$X \in \R^r$ such that
\begin{enumerate}[(i)]
\item $\Exp|X|^p < \infty$; and
\item for any scalar-valued pseudo-Lipschitz continuous function $f$ of order $p$,
\[
    \lim_{N \arr \infty} \frac{1}{N} \sum_{n=1}^N \fbf(\xbf_n(N)) = \Exp\left[ \fbf(X) \right].
\]
\end{enumerate}
Thus, the empirical mean of the components $\fbf(\xbf_n(N))$ converges to
the expectation $\Exp( \fbf(X) )$.
In this case, with some abuse of notation, we will write
\beq \label{eq:plLim}
    \lim_{N \arr \infty} \left\{ \xbf_n \right\} \stackrel{PL(p)}{=} X,
\eeq
where, as usual, we have omitted the dependence $\xbf_n=\xbf_n(N)$.
Importantly, empirical convergence can be defined on deterministic vector sequences,
with no need for a probability space.  If $\xbf=\xbf(N)$ is a random vector sequence,
we will often require that the limit \eqref{eq:plLim} holds almost surely.

Finally, let $\fbf(\xbf,\lambda)$ be a function on $\xbf \in \R^s$, dependent on some parameter
vector $\lambda \in \R^d$.
We say that $\fbf(\xbf,\lambda)$ is \emph{uniformly pseudo-Lipschitz continuous} of order $p$
in $\xbf$ at $\lambda=\lambdabar$ if there exists an open neighborhood $U$ of $\lambdabar$, such that
(i) $\fbf(\xbf,\lambda)$ is pseudo-Lipschitz of order $p$ in $\xbf$
for all $\lambda \in U$; and (ii)
there exists a constant $L$ such that
\beq \label{eq:unifLip2}
    \|\fbf(\xbf,\lambda_1)-\fbf(\xbf,\lambda_2)\| \leq L\left(1+\|\xbf\|\right)|\lambda_1-\lambda_2|,
\eeq
for all $\xbf \in \R^s$ and $\lambda_1,\lambda_2 \in U$.
It can be verified that if $\fbf(\xbf,\lambda)$
is uniformly pseudo-Lipschitz continuous of order $p$
in $\xbf$ at $\lambda=\lambdabar$, and $\lim_{N \arr \infty} \lambda(N) = \lambdabar$, then
\[
    \lim_{N \arr \infty} \xbf \stackrel{PL(p)}{=} X \Longrightarrow
    \lim_{N \arr \infty} \lim_{N \arr \infty} \frac{1}{N} \sum_{n=1}^N \fbf(\xbf_n(N),\lambda(N))
    = \Exp\left[ \fbf(X,\lambdabar) \right].
\]
In the case when $\fbf(\cdot)$ is uniformly pseudo-Lipschitz continuous of order $p=1$,
we will simply say $\fbf(\cdot)$ is uniformly Lipschitz.

\section{A General Convergence Result}

The proof of Theorem~\ref{thm:em-se} uses the following minor modification of the
general convergence result in \cite[Theorem 4]{rangan2016vamp}:
We are given a dimension $N$,
an orthogonal matrix $\Vbf \in \R^{N \x N}$,
an initial vector $\ubf_0 \in \R^N$, and disturbance vectors
$\wbf^p, \wbf^q \in \R^N$.
Then, we generate a sequence of iterates by the following recursion:%
\begin{subequations}
\label{eq:algoGen}
\begin{align}
    \pbf_k &= \Vbf\ubf_k \label{eq:pupgen} \\
    \alpha_{1k} &= \bkt{ \fbf_p'(\pbf_k,\wbf^p,\gamma_{1k},\thetabfhat_{1k})},
    \quad \mu_{1k} = \bkt{ \phibf_p(\pbf_k,\wbf^p,\gamma_{1k},\thetabfhat_{1k}) } \label{eq:alpha1gen}  \\
    (C_{1k},\gamma_{2k},\thetabfhat_{1,\kp1}) &= \Gamma_1(\gamma_{1k},\alpha_{1k},\mu_{1k},\thetabfhat_{1k})
        \label{eq:gam1gen} \\
    \vbf_k &= C_{1k}\left[
        \fbf_p(\pbf_k,\wbf^p,\gamma_{1k},\thetabfhat_{1k})- \alpha_{1k} \pbf_{k} \right] \label{eq:vupgen} \\
    \qbf_k &= \Vbf\tran\vbf_k \label{eq:qupgen} \\
    \alpha_{2k} &= \bkt{ \fbf_q'(\qbf_k,\wbf^q,\gamma_{2k},\thetahat_{2k}))}, \quad
    \mu_{2k} = \bkt{ \phi_q(\qbf_k,\wbf^q,\gamma_{2k},\thetahat_{2k}) }         \label{eq:alpha2gen} \\
    (C_{2k},\gamma_{1,\kp1},\thetahat_{2,\kp1}) &= \Gamma_1(\gamma_{2k},\alpha_{2k},\mu_{2k},\thetahat_{2k})
        \label{eq:gam2gen} \\
    \ubf_{\kp1} &= C_{2k}\left[
        \fbf_q(\qbf_k,\wbf^q,\gamma_{2k},\thetahat_{2k}) - \alpha_{2k}\qbf_{k} \right], \label{eq:uupgen}
\end{align}
\end{subequations}
which is initialized with some vector $\ubf_0$, scalar $\gamma_{10}$ and parameters
$\thetahat_0 = (\thetabfhat_{10},\thetahat_{20})$.
Here, $\fbf_p(\cdot)$, $\fbf_q(\cdot)$, $\phibf_p(\cdot)$ and $\phibf_q(\cdot)$
are separable functions, meaning for all components $n$,
\begin{align}\label{eq:fpqcomp}
\begin{split}
    \left[ \fbf_p(\pbf,\wbf^p,\gamma_1,\thetabfhat_1)\right]_n = f_p(p_n,w^p_n,\gamma_1,\thetabfhat_1) \quad
    \left[ \fbf_q(\qbf,\wbf^q,\gamma_2,\thetahat_2)\right]_n = f_q(q_n,w^q_n,\gamma_2,\thetahat_2), \\
    \left[ \phibf_p(\pbf,\wbf^p,\gamma_1,\thetabfhat_1)\right]_n = \phi_p(p_n,w^p_n,\gamma_1,\thetabfhat_1) \quad
    \left[ \phibf_q(\qbf,\wbf^q,\gamma_2,\thetahat_2)\right]_n = \phi_q(q_n,w^q_n,\gamma_2,\thetahat_2), \\
\end{split}
\end{align}
for functions $f_p(\cdot)$, $f_q(\cdot)$, $\phi_q(\cdot)$ and $\phi_p(\cdot)$. These functions may be
vector-valued, but their dimensions are fixed and do not change with $N$.
The functions $\Gamma_i(\cdot)$, $C_i(\cdot)$ and $T_i(\cdot)$
also have fixed dimensions with $C_i(\cdot)$ being scalar-valued.

Similar to the analysis of EM-VAMP, we consider the following
large-system limit (LSL) analysis.
We consider a sequence of runs of the recursions indexed by $N$.
We model the initial condition $\ubf_0$ and disturbance vectors $\wbf^p$ and $\wbf^q$
as deterministic sequences that scale with $N$ and assume that their components
converge empirically as
\beq \label{eq:U0lim}
    \lim_{N \arr \infty} \{ u_{0n} \} \PLeq U_0, \quad
    \lim_{N \arr \infty} \{ w^p_n \} \PLeq W^p, \quad
    \lim_{N \arr \infty} \{ w^q_n \} \PLeq W^q,
\eeq
to random variables $U_0$, $W^p$ and $W^q$.  We assume that the
initial constants converges as
\beq \label{eq:gam10limgen}
    \lim_{N \arr \infty} (\gamma_{10},\thetabfhat_{10},\thetahat_{20}) = (\gammabar_{10},\thetabfbar_{10},\thetabar_{20})
\eeq
for some values $(\gammabar_{10},\thetabfbar_{10},\thetabar_{20})$.
The matrix $\Vbf \in \R^{N \x N}$
is assumed to be Haar distributed (i.e.\ uniform on the set of orthogonal matrices) and
independent of $\rbf_0$, $\wbf^p$ and $\wbf^q$.
Since $\rbf_0$, $\wbf^p$ and $\wbf^q$ are deterministic, the only randomness is in the matrix $\Vbf$.

Under the above assumptions, define the SE equations
\begin{subequations} \label{eq:segen}
\begin{align}
    \alphabar_{1k} &= \Exp\left[ f_p'(P_k,W^p,\gammabar_{1k},\thetabfbar_{1k})\right],
    \quad
    \mubar_{1k} = \Exp\left[ \phi_p(P_k,W^p,\gammabar_{1k},\thetabfbar_{1k}) \right] \label{eq:a1segen}  \\
    (\Cbar_{1k},\gammabar_{2k},\thetabfbar_{1,\kp1}) &=
        \Gamma_1(\gammabar_{1k},\alphabar_{1k},\mubar_{1k},\thetabfbar_{1k})
        \label{eq:gam1segen} \\
    \tau_{2k} &= \Cbar_{1k}^2 \left\{
        \Exp\left[ f_p^2(P_k,W^p,\gammabar_{1k},\thetabfbar_{1k})\right]  - \alphabar_{1k}^2\tau_{1k} \right\}
        \label{eq:tau2segen} \\
    \alphabar_{2k} &= \Exp\left[ f_q'(Q_k,W^q,\gammabar_{2k},\thetabar_{2k}) \right],
    \quad
    \mubar_{1k} = \Exp\left[ \phi_q(Q_k,W^q,\gammabar_{2k},\thetabar_{2k}) \right]
        \label{eq:a2segen} \\
    \tau_{1,\kp1} &= \Cbar_{2k}^2\left\{
        \Exp\left[ f_q^2(Q_k,W^q,\gammabar_{2k},\thetabar_{2k})\right] - \alphabar_{2k}^2\tau_{2k}\right\}
        \label{eq:tau1segen}
\end{align}
\end{subequations}
which are initialized with the values in \eqref{eq:gam10limgen} and
\beq \label{eq:tau10gen}
    \tau_{10} = \Exp U_0^2,
\eeq
where $U_0$ is the random variable in \eqref{eq:U0lim}.
In the SE equations~\eqref{eq:segen},
the expectations are taken with respect to random variables
\[
    P_k \sim \Norm(0,\tau_{1k}), \quad Q_k \sim \Norm(0,\tau_{2k}),
\]
where $P_k$ is independent of $W^p$ and $Q_k$ is independent of $W^q$.

\begin{theorem} \label{thm:genConv}  Consider the recursions \eqref{eq:algoGen}
and SE equations \eqref{eq:segen} under the above assumptions.  Assume additionally that,
for all $k$:
\begin{enumerate}[(i)]
\item For $i=1,2$, the function $\Gamma_i(\gamma_i,\alpha_i,\mu_i,\thetahat_i)$
is continuous at the points $(\gammabar_{ik},\alphabar_{ik},\mubar_{ik},\thetabfbar_{ik})$
from the SE equations;
\item The function $f_p(p,w^p,\gamma_1,\thetabfhat_1)$ and its derivative $f_p'(p,w^p,\gamma_1)$
are uniformly Lipschitz continuous in $(p,w^p)$ at $(\gamma_1,\thetabf_1)=(\gammabar_{1k},\thetabfbar_{1k})$.
The function $f_q(q,w^q,\gamma_2,\thetahat_2)$ satisfies the analogous conditions.

\item The function $\phi_p(p,w^p,\gamma_1,\thetabfhat_1)$ is
uniformly pseudo Lipschitz continuous of order 2
in $(p,w^p)$ at $(\gamma_1,\thetabf_1)=(\gammabar_{1k},\thetabfbar_{1k})$.
The function $\phi_q(q,w^q,\gamma_2,\thetahat_2)$ satisfies the analogous conditions.
\end{enumerate}
Then,
\begin{enumerate}[(a)]
\item For any fixed $k$, almost surely the components of $(\wbf^p,\pbf_0,\ldots,\pbf_k)$
empirically converge as
\beq \label{eq:Pconk}
    \lim_{N \arr \infty} \left\{ (w^p_n,p_{0n},\ldots,p_{kn}) \right\}
    \PLeq (W^p,P_0,\ldots,P_k),
\eeq
where $W^p$ is the random variable in the limit \eqref{eq:U0lim} and
$(P_0,\ldots,P_k)$ is a zero mean Gaussian random vector independent of $W^p$,
with $\Exp(P_k^2) = \tau_{1k}$.  In addition, we have that
\beq \label{eq:ag1limgen}
    \lim_{N \arr \infty} (\alpha_{1k},\mu_{1k},\gamma_{2k},C_{1k},\thetabfhat_{1,\kp1})
        = (\alphabar_{1k},\mubar_{1k},\gammabar_{2k},\Cbar_{1k},\thetabfbar_{1,\kp1})
\eeq
almost surely.

\item For any fixed $k$, almost surely the components of $(\wbf^q,\qbf_0,\ldots,\qbf_k)$
empirically converge as
\beq \label{eq:Qconk}
    \lim_{N \arr \infty} \left\{ (w^q_n,q_{0n},\ldots,q_{kn}) \right\}
    \PLeq (W^q,Q_0,\ldots,Q_k),
\eeq
where $W^q$ is the random variable in the limit \eqref{eq:U0lim} and
$(Q_0,\ldots,Q_k)$ is a zero mean Gaussian random vector independent of $W^q$,
with $\Exp(P_k^2) = \tau_{2k}$.  In addition, we have that
\beq \label{eq:ag2limgen}
    \lim_{N \arr \infty} (\alpha_{2k},\mu_{2k},\gamma_{1,\kp1},C_{2k},\thetahat_{2,\kp1})
        = (\alphabar_{2k},\mubar_{2k},\gammabar_{1,\kp1},\Cbar_{2k},\thetabar_{2,\kp1})
\eeq
almost surely.
\end{enumerate}
\end{theorem}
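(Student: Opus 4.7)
The plan is to proceed by induction on the iteration index $k$, reducing each step to Theorem~4 of \cite{rangan2016vamp} after handling the two new features introduced by adaptation: the statistic averages $\mu_{ik}$ and the data-dependent parameters $\thetabfhat_{ik}$. The induction hypothesis at step $k$ is that the joint empirical convergences \eqref{eq:Pconk} and \eqref{eq:Qconk}, and the scalar convergences \eqref{eq:ag1limgen} and \eqref{eq:ag2limgen}, hold at all earlier indices. The base case follows from the empirical convergence of $\ubf_0$ in \eqref{eq:U0lim}, the rotational invariance of the Haar matrix $\Vbf$, and the independence of $\Vbf$ from $(\ubf_0,\wbf^p,\wbf^q)$: this gives $\pbf_0 = \Vbf\ubf_0$ converging empirically to $\Norm(0,\tau_{10})$ with $\tau_{10}=\Exp U_0^2$, jointly independent of $\wbf^p$.

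For the inductive step, I would proceed in three sub-steps on the $p$-side, with an identical chain repeated on the $q$-side. First, given the hypothesis that $\pbf_k$ converges jointly with $\wbf^p$ to $(P_k,W^p)$ and that $(\gamma_{1k},\thetabfhat_{1k}) \arr (\gammabar_{1k},\thetabfbar_{1k})$ almost surely, I would establish a \emph{parameter-freezing} lemma: by the uniform Lipschitz continuity of $f_p$ and $f_p'$ and the uniform pseudo-Lipschitz(2) continuity of $\phi_p$ in $(p,w^p)$ at the deterministic limit $(\gammabar_{1k},\thetabfbar_{1k})$, one may substitute the limit for $(\gamma_{1k},\thetabfhat_{1k})$ inside any empirical average, incurring an error of order $(1+\bkt{\|\pbf_k\|}+\bkt{\|\wbf^p\|})\cdot\|(\gamma_{1k},\thetabfhat_{1k})-(\gammabar_{1k},\thetabfbar_{1k})\|$, which tends to zero almost surely because the bracket factor has a finite almost-sure limit. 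Standard PL-convergence (Appendix~\ref{sec:empConv}) then yields $\alpha_{1k}\arr\alphabar_{1k}$ and $\mu_{1k}\arr\mubar_{1k}$, and continuity of $\Gamma_1$ delivers $(C_{1k},\gamma_{2k},\thetabfhat_{1,\kp1}) \arr (\Cbar_{1k},\gammabar_{2k},\thetabfbar_{1,\kp1})$.

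Second, with the scalar parameters pinned to their deterministic limits, $\vbf_k$ differs in empirical norm from $\Cbar_{1k}\bigl[f_p(\pbf_k,\wbf^p,\gammabar_{1k},\thetabfbar_{1k})-\alphabar_{1k}\pbf_k\bigr]$ only by a vanishing perturbation, so the update $\qbf_k = \Vbf\tran\vbf_k$ reduces to exactly the frozen-parameter VAMP recursion analyzed in Theorem~4 of \cite{rangan2016vamp}. Invoking the Bayati--Montanari conditioning argument from that proof --- revealing $\Vbf$ step-by-step on the $\sigma$-algebra generated by all iterates computed so far and using the Onsager correction $-\alphabar_{1k}\pbf_k$ to cancel the non-Gaussian contribution --- yields the joint Gaussian limit \eqref{eq:Qconk} for $(\qbf_0,\ldots,\qbf_k)$ with second moment precisely $\tau_{2k}$ given by \eqref{eq:tau2segen}. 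Repeating the same three sub-steps on the $q$-side produces $\ubf_{\kp1}$ and hence $\pbf_{\kp1}=\Vbf\ubf_{\kp1}$, closing the induction.

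The main obstacle is the circular dependency between the iterates and the adaptation parameters: $\thetabfhat_{ik}$ is a functional of $(\pbf_0,\ldots,\pbf_k,\qbf_0,\ldots,\qbf_{k-1})$ through $\mu_{ik}$, while the future iterates themselves depend on $\thetabfhat_{ik}$ through $f_p$ and $f_q$. The resolution is entirely packaged into the regularity hypotheses: uniform Lipschitz continuity of $f_p,f_q,f_p',f_q'$ in the vector arguments at the deterministic SE points ensures that swapping the random $\thetabfhat_{ik}$ for its deterministic limit costs only $o(1)$ in empirical norm, while pseudo-Lipschitz(2) continuity of $\phi_p,\phi_q$ is the minimal regularity guaranteeing that empirical averages converge to Gaussian expectations under the SE distributions. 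Once these two reductions are in hand, the Haar conditioning argument of Theorem~4 of \cite{rangan2016vamp} applies essentially verbatim on each iteration, and Theorem~\ref{thm:em-se} itself follows by specializing $\Gamma_i$, $f_p$, $f_q$, $\phi_p$, $\phi_q$ to the adaptive VAMP updates in Algorithm~\ref{algo:em-vamp}.
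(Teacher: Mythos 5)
Your proposal is correct and follows essentially the same route as the paper: both treat the result as a minor extension of Theorem~4 of the VAMP paper, handling the new quantities $\mu_{ik}$ and $\thetabfhat_{ik}$ inside the same induction by using the uniform (pseudo-)Lipschitz hypotheses to freeze the random parameters at their deterministic SE limits and then invoking continuity of $\Gamma_i$. The paper is simply terser, deferring the Haar-conditioning machinery entirely to the cited theorem rather than sketching it as you do.
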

\begin{proof}  This is a minor modification of ~\cite[Theorem 4]{rangan2016vamp}.
Indeed, the only change here is the addition of the terms $\mu_{ik}$ and $\thetabfhat_{ik}$.
Their convergence can be proven identically to that of $\alpha_{ik}$, $\gamma_{ik}$.
For example, consider the proof of convergence $\alpha_{1k} \arr \alphabar_{1k}$
in \cite[Theorem 4]{rangan2016vamp}.  This is proved as part of an induction argument
where, in that part of the proof, the induction hypothesis is that \eqref{eq:Pconk} holds for some $k$
and \eqref{eq:ag2limgen} holds for $\km1$.  Thus, $\gamma_{1k} \arr \gammabar_{1k}$
and $\thetabfhat_{1k} \arr \thetabfbar_{1k}$ almost surely and $\lim_{N \arr \infty} (\pbf_k,\wbf^p) \PLeq
(P_k,W^p)$.  Hence, since
$f_p(\cdot)$ is uniformly Lipschitz continuous,
\[
    \lim_{N \arr \infty} \alpha_{1k} = \lim_{N \arr \infty}
        \bkt{ \fbf_p(\pbf_k,\wbf^p,\gamma_{1k},\thetabfhat_{1k}) }
        = \Exp\left[ f_p(P_k,W^p,\gammabar_{1k},\thetabfbar_{1k}) \right] = \alphabar_{1k}.
\]
Similarly, since we have assumed the uniform pseudo-Lipschitz continuity of $\phi_p(\cdot)$,
\[
    \lim_{N \arr \infty} \mu_{1k} = \lim_{N \arr \infty}
        \bkt{ \phibf_p(\pbf_k,\wbf^p,\gamma_{1k},\thetabfhat_{1k}) }
        = \Exp\left[ \phi_p(P_k,W^p,\gammabar_{1k},\thetabfbar_{1k}) \right] = \mubar_{1k}.
\]
Thus, we have
\[
    \lim_{N \arr \infty} (\gamma_{1k},\alpha_{1k},\mu_{1k},\thetabfhat_{1k})
    = (\gammabar_{1k},\alphabar_{1k},\mubar_{1k},\thetabfbar_{1k}),
\]
almost surely.  Then, since $\Gamma_1(\cdot)$ is continuous, \eqref{eq:gam1gen} shows that
\[
    \lim_{N \arr \infty} (C_{1k},\gamma_{2k},\thetabfhat_{1,\kp1}) =
    (\Cbar_{1k},\gammabar_{2k},\thetabfbar_{1,\kp1}) =
        \Gamma_1(\gammabar_{1k},\alphabar_{1k},\mubar_{1k},\thetabfbar_{1k}).
\]
This proves \eqref{eq:ag1limgen}.  The modifications for the proof of
\eqref{eq:ag2limgen} are similar.
\end{proof}

\section{Proof of Theorem~\ref{thm:em-se} }

This proof is also a minor modification of the SE analysis of VAMP in
\cite[Theorem 1]{rangan2016vamp}.
As in that proof,
we need to simply rewrite the recursions in Algorithm~\ref{algo:em-vamp} in the form
\eqref{eq:algoGen} and apply the general convergence result in Theorem~\ref{thm:genConv}.
To this end, define the error terms
\beq \label{eq:pvslr}
    \pbf_k := \rbf_{1k}-\xbf^0, \quad
    \vbf_k := \rbf_{2k}-\xbf^0,
\eeq
and their transforms,
\beq \label{eq:uqslr}
    \ubf_k := \Vbf\tran\pbf_k, \quad
    \qbf_k := \Vbf\tran\vbf_k.
\eeq
Also, define the disturbance terms
\beq \label{eq:wpqslr}
    \wbf^q := (\xibf,\sbf), \quad
    \wbf^p := \xbf^0, \quad \xibf := \Ubf\tran\wbf.
\eeq
Define the component-wise update functions
\beq \label{eq:fqpslr}
    f_p(p,x^0,\gamma_1,\thetabfhat_1) := g_1(p+x^0,\gamma_1,\thetabfhat_1) - x^0, \quad
    f_q(q,(\xi,s),\gamma_2,\thetahat_2) := \frac{\thetahat_2 s\xi + \gamma_2 q}{
        \thetahat_2 s^2 + \gamma_2}.
\eeq
Also, define the adaptation functions,
\beq \label{eq:phiqpslr}
    \phi_p(p,x^0,\gamma_1,\thetabfhat_1) := \phi_1(p+x^0,\gamma_1,\thetabfhat_1), \quad
    \phi_q(q,(\xi,s),\gamma_2,\thetahat_2) := \phi_2(q,\xi,s,\gamma_2,\thetahat_2),
\eeq
where $\phi_1(\cdot)$ is the statistic function in Algorithm~\ref{algo:em-vamp}
and $\phi_2(\cdot)$ are defined in \eqref{eq:phi2def}.

With these definitions, we claim that the outputs satisfy the recursions:
\begin{subequations} \label{eq:gecslr}
\begin{align}
    \pbf_k &= \Vbf\ubf_k \label{eq:pupslr} \\
    \alpha_{1k} &= \bkt{ \fbf_p'(\pbf_k,\xbf^0,\gamma_{1k},\thetabfhat_1)},
    \quad \gamma_{2k} = \frac{(1-\alpha_{1k})\gamma_{1k}}{\alpha_{1k}}
        \label{eq:alpha1slr}  \\
    \vbf_k &= \frac{1}{1-\alpha_{1k}}\left[
        \fbf_p(\pbf_k,\xbf^0,\gamma_{1k})- \alpha_{1k} \pbf_{k} \right] \label{eq:vupslr} \\
    \thetabfhat_{1,\kp1} &= T_1(\mu_{1k}), \quad \mu_{1k} = \bkt{ \phi_1(\vbf_{k} + \xbf^0,\gamma_{1k},\thetabfhat_{1k}) }
        \label{eq:theta1slr} \\
    \qbf_k &= \Vbf\tran\vbf_k \label{eq:qupslr} \\
    \alpha_{2k} &= \bkt{ \fbf_q'(\qbf_k,\wbf^q,\gamma_{2k})},
    \quad \gamma_{1,\kp1} = \frac{(1-\alpha_{2k})\gamma_{2k}}{\alpha_{2k}}
        \label{eq:alpha2slr} \\
    \ubf_{\kp1} &= \frac{1}{1-\alpha_{2k}}\left[
        \fbf_q(\qbf_k,\wbf^q,\gamma_{2k}) - \alpha_{2k}\qbf_{k} \right] \label{eq:uupslr} \\
    \thetahat_{2,\kp1} &= T_2(\mu_{2k}), \quad \mu_{2k} = \bkt{ \phi_2(\qbf_{k},\xibf,\gamma_{2k},\thetahat_{2k}) }
        \label{eq:theta2slr}
\end{align}
\end{subequations}
The fact that the updates in the modified Algorithm~\ref{algo:em-vamp}
satisfy \eqref{eq:pupslr}, \eqref{eq:alpha1slr}, \eqref{eq:qupslr},
\eqref{eq:alpha2slr} can be proven exactly as in the proof of \cite[Theorem 1]{rangan2016vamp}.
The update \eqref{eq:theta1slr} follows from line~\ref{line:theta1} in Algorithm~\ref{algo:em-vamp}.
The update \eqref{eq:theta2slr} follows from the definitions in \eqref{eq:theta2T}.

Now, define
\[
    C_{ik} = \frac{1}{1-\alpha_i},
\]
and the function,
\beq \label{eq:Gamdefpf}
    \Gamma_i(\gamma_{i},\alpha_{i},\mu_{i},\thetabfhat_{i}) = \left(
        \frac{1}{1-\alpha_i}, \frac{\gamma_i(1-\alpha_i)}{\alpha_i},
        T_i(\mu_i) \right),
\eeq
the updates \eqref{eq:gecslr} are equivalent to
the general recursions in \eqref{eq:algoGen}.
Also, the continuity assumptions (i)--(iii)
in Theorem~\ref{thm:genConv} are also easily
verified.  For example, for assumption (i), $\Gamma_i(\cdot)$ in \eqref{eq:Gamdefpf}
is continuous, since we have assumed that $\alphabar_{ik} \in (0,1)$ and
$T_i(\mu_i)$ is continuous at $\mu_i = \mubar_{ik}$.

For assumption (ii), the function $f_p(\cdot)$ in \eqref{eq:fqpslr} is uniformly
Lipschitz-continuous since we have assumed that $g_1(\cdot)$ is uniformly Lipschitz-continuous.
Also, the SE equations \eqref{eq:eta1se} and \eqref{eq:eta1se} and the assumption that
$\alphabar_{ik} > 0$ show that $\gammabar_{ik} > 0$ for all $i$ and $k$.
In addition, the definition of $T_2(\cdot)$ and $\phi_2(\cdot)$ in \eqref{eq:theta2T}
and \eqref{eq:phi2def} shows that $\thetabar_{2k} > 0$ as defined in \eqref{eq:theta2se}.
Therefore, since $s$ is bounded in $[0,s_{\rm max}]$ and $\gammabar_{2k} > 0$, $\thetabar_{2k} > 0$,
we see that $f_q(\cdot)$ in \eqref{eq:fqpslr} is uniformly Lipschitz-continuous.

Finally, for assumption (iii) in Theorem~\ref{thm:genConv}, the uniform pseudo-Lipschitz continuity
of $\phi_p(\cdot)$ in \eqref{eq:phiqpslr} follows from the
uniform pseudo-Lipschitz continuity assumption on $\phi_1(\cdot)$.  Also, $\phi_2(\cdot)$
in \eqref{eq:phi2def} is
uniformly pseudo-Lipschitz continuous of order 2 since $s \in [0,s_{\rm max}]$.
Therefore,  $\phi_q(\cdot)$ in \eqref{eq:phiqpslr} is also
uniformly pseudo-Lipschitz continuous of order 2.

Hence, all the conditions of Theorem~\ref{thm:genConv} are satisfied.  The SE equations
\eqref{eq:se} then follow directly from the SE equations \eqref{eq:segen}.
This completes the proof.

\section{Variance Auto-Tuning Details and Proof of Theorem~\ref{thm:theta1cons}}
\label{sec:autotuneDetails}

\subsection{Proof of Theorem~\ref{thm:theta1cons}}
To prove Theorem~\ref{thm:theta1cons}, we perform the \emph{variance auto-tuning}
adaptation as follows.
Suppose the $X^0$ follows $X^0 \sim p(\xbf|\thetabf_1^0)$ for some true parameter
$\thetabf_1^0$.  Then, the SE analysis shows that the components
of $\rbf_{1k}$ asymptotically behave as
\beq \label{eq:RX0cons}
    R = X^0 + P, \quad P = \Norm(0,\tau_1), \quad X^0 \sim p(\xbf|\thetabf_1),
\eeq
for $\thetabf_1 = \thetabf_1^0$ and some $\tau_1 = \tau_{1k}$.
The idea in variance auto-tuning is to jointly estimate $(\thetabf_1,\tau_1)$, say via maximum
likelihood estimation,
\beq \label{eq:theta1ML}
    (\tauhat_{1k},\thetabfhat_{1k}) = \argmax_{\tau_1,\thetabf_1}
        \sum_{n=1}^N \ln p(r_{1k,n}| \tau_1,\thetabf_1),
\eeq
where we use the notation $p(r_1|\tau_1,\thetabf_1)$ to denote the density of $R$
under the model \eqref{eq:RX0cons}.
We can then use both the parameter estimate $\thetabfhat_{1k}$ and variance
estimate $\tauhat_{1k}$ in the denoiser,
\beq \label{eq:xhat1auto}
    \xbfhat_{1k} = \gbf_1(\rbf_{1k},\gamma_{1k},\thetabfhat_{1k}), \quad
    \gamma_{1k} = \tauhat_{1k}^{-1}.
\eeq
The parameter estimate \eqref{eq:theta1ML} and precision estimate $\gamma_{1k} = \tauhat_{1k}^{-1}$
would replace lines~\ref{line:theta1} and \ref{line:gam1} in Algorithm~\ref{algo:em-vamp}.
Since we are using the estimated precision $\tauhat_{1k}^{-1}$ from \eqref{eq:theta1ML}
instead of the
predicted precision $\gamma_{1k}$, we call the estimation \eqref{eq:theta1ML}
with the estimate \eqref{eq:xhat1auto} variance auto-tuning.

Now suppose that statistic as in Definition~\ref{def:ident1} exists.
By the inverse property,
there must be a continuous mapping $T_1(\mu)$ such that for any $\tau_1,\thetabf_1$,
\beq \label{eq:muT1}
    \mu_1 = \Exp\left[  \phi_1(R) | \tau_1, \thetabf_1 \right] \Rightarrow
    T_1(\mu_1) = (\tau_1,\thetabf_1).
\eeq
Although the order of updates of $\thetabfhat_{1k}$ and $\gamma_{1k}$ is slightly
changed from Algorithm~\ref{algo:em-vamp}, we can apply an identical SE analysis
and show that, in the LSL,
\beq \label{eq:muT2}
    \lim_{N \arr \infty} (\tauhat_{1k}, \thetabfhat_{1k}) = (\taubar_{1k},\thetabfhat_{1k})
    = T_1(\mubar_{1k}),
    \quad
    \mubar_{1k} = \Exp\left[  \phi_1(R_1) | \tau_{1k}, \thetabf_1^0 \right],
\eeq
where $\thetabf_1^0$ is the true parameter.  Comparing
\eqref{eq:muT1} and \eqref{eq:muT2} we see that $\thetabfbar_{1k} = \thetabf_1^0$
and $\taubar_{1k} = \tau_{1k}$.
This proves Theorem~\ref{thm:theta1cons}.

\subsection{Auto-Tuning Details}
The auto-tuning above requires solving a
joint ML estimation \eqref{eq:theta1ML}.
We describe a simple approximate method for this estimation using EM\@.
First, we rewrite the ML estimation in \eqref{eq:theta1ML} as
\beq \label{eq:theta1MLvec}
    (\tauhat_1,\thetabfhat_1) = \argmax_{\tau_1,\thetabf_1} \ln p(\rbf_1|\tau_1,\thetabf_1),
\eeq
where we have dropped the dependence on the iteration $k$ to simplify the notation and we have used
the notation $p(\rbf_1|\tau_1,\thetabf_1) := \prod_n p(r_{1n}|\tau_1,\thetabf_1)$.
It will be more convenient to precisions rather than variances, so we rewrite
\eqref{eq:theta1MLvec2},
\beq \label{eq:theta1MLvec2}
    (\gammahat_1,\thetabfhat_1) = \argmax_{\gamma_1,\thetabf_1} \ln p(\rbf_1|\gamma_1,\thetabf_1),
\eeq
where $\gamma_1 = 1/\tau_1$.
Treating $\xbf$ as latent vector, the EM procedure for the ML estimate
\eqref{eq:theta1MLvec2} performs updates $(\gammahat_1^{\rm old},\thetabfhat_1^{\rm old})
\mapsto (\gammahat_1^{\rm new},\thetabfhat_1^{\rm new})$ given by
\beq \label{eq:EMauto}
    (\gammahat_1^{\rm new},\thetabfhat_1^{\rm new})
        = \argmax_{\gamma_1,\thetabf_1} \Exp\left[ \ln p(\xbf,\rbf_1|\gamma_1,\thetabf_1)
            \mid \rbf_1, \gamma_1^{\rm old},\thetabfhat_1^{\rm old} \right],
\eeq
where the expectation is with respect to the posterior density,
$p(\xbf|\rbf_1,\gammahat_1^{\rm old},\thetabfhat_1^{\rm old})$.  But, this posterior density
is precisely the belief estimate $b_i(\xbf|\rbf_1,\gamma_i^{\rm old},\thetahat_1^{\rm old})$.
Now, under the model~\eqref{eq:RX0cons},
the log likelihood $\ln p(\xbf,\rbf_1|\gamma,\thetabf_1)$ is given by
\begin{align}
    \MoveEqLeft \ln p(\xbf,\rbf_1|\gamma_1,\thetabf_1)
        = \ln p(\rbf|\xbf,\gamma_1) + \ln p(\xbf|\thetabf_1) \nonumber \\
        &= -\frac{\gamma_1}{2}\|\rbf-\xbf\|^2 + \frac{N}{2}\ln(\gamma_1) + \ln p(\xbf|\thetabf_1).
        \nonumber
\end{align}
Thus, the maximization in \eqref{eq:EMauto} with respect to $\gamma_1$ is
\begin{align}
    \left(\gammahat_1^{\rm new} \right)^{-1} = \frac{1}{N}
        \Exp\left[ \|\xbf-\rbf_1\|^2 \mid| \tauhat_1^{\rm old},\thetabfhat_1^{\rm old} \right]
        = \frac{1}{N} \|\xbfhat_1 -\rbf_1\|^2 + \eta_1^{-1},
\end{align}
where
\[
    \xbfhat_1 = \Exp\left[ \xbf \mid \rbf_1, \gammahat_1^{\rm old},\thetabfhat_1^{\rm old} \right],
    \quad
    \eta_1^{-1} = \frac{1}{N}
        \Tr\Cov\left[ \xbf \mid \rbf_1, \gammahat_1^{\rm old},\thetabfhat_1^{\rm old} \right].
\]
The maximization in \eqref{eq:EMauto} with respect to $\thetabf_1$ is
\[
    \thetabfhat_1^{\rm new} = \argmax_{\thetabf_1} \Exp\left[ p(\xbf|\thetabf_1) |
        \rbf_1,\gamma_1^{\rm old},\thetabfhat_1^{\rm old} \right],
\]
which is precisely the EM update \eqref{eq:thetaEM} with the current parameter estimates.
Thus, the auto-tuning estimation can be performed by repeatedly performing updates
\begin{subequations}
\begin{align}
    \xbfhat_1 &= \Exp\left[ \xbf \mid \rbf_1, \gammahat_1^{\rm old},\thetabfhat_1^{\rm old} \right],
    \quad
    \eta_1^{-1} = \frac{1}{N}
        \Tr\Cov\left[ \xbf \mid \rbf_1, \gammahat_1^{\rm old},\thetabfhat_1^{\rm old} \right],
        \label{eq:xhatauto} \\
    \left(\gammahat_1^{\rm new} \right)^{-1} &= \frac{1}{N} \|\xbfhat_1 -\rbf_1\|^2 + \eta_1^{-1}
        \label{eq:gamauto} \\
    \thetabfhat_1^{\rm new} &= \argmax_{\thetabf_1} \Exp\left[ p(\xbf|\thetabf_1) |
        \rbf_1,\gamma_1^{\rm old},\thetabfhat_1^{\rm old} \right].
        \label{eq:thetaauto}
\end{align}
\end{subequations}
Thus, each iteration of the auto-tuning involves calling the denoiser \eqref{eq:xhatauto} at
the current estimate; updating the variance from the denoiser error \eqref{eq:gamauto}; and
updating the parameter estimate via an EM-type update \eqref{eq:thetaauto}.

\subsection{Consistent Estimation of $\theta_2$} \label{sec:theta2Est}

We show how we can perform a similar ML estimation as in Section~\ref{sec:autotune}
for estimating the noise precision $\theta_2$ and error variance $\tau_2$ at the output.
Suppose that the true noise precision is $\theta_2 = \theta_2^0$.
Define
\beq \label{eq:zkpf}
    \zbf_k = \Sbf\Vbf\tran\rbf_{2k} - \Ubf\tran\ybf .
\eeq
Then,
\[
    \zbf_k \stackrel{(a)}{=} \Sbf\Vbf\tran\rbf_{2k} - \Ubf\tran(\Abf\xbf^0 + \wbf)
    \stackrel{(b)}{=} \Sbf\qbf_k + \xibf,
\]
where (a) follows from \eqref{eq:yAxslr} and (b) follows from \eqref{eq:ASVD} and \eqref{eq:qerrdef}.
From Theorem~\ref{thm:em-se} and the limit \eqref{eq:limqxi}, $\zbf_k$ converge empirically to
a random variable $Z_k$ given by
\beq \label{eq:zmod}
    Z_k = SQ_k + \Xi, \quad Q_k \sim \Norm(0,\tau_2), \quad \Xi \sim \Norm(0,\theta_2^{-1}),
\eeq
where $(\theta_2,\tau_2)=(\theta_2^0,\tau_{2k})$ are the true noise precision and error variance.
Similar to the auto-tuning of $\thetabf_1$ in Section~\ref{sec:autotune}, we
can attempt to estimate $(\tau_2,\theta_2)$ via ML estimation,
\beq \label{eq:theta2ML}
    (\tauhat_{2k},\thetahat_{2k}) = \argmax_{\tau_2,\theta_2}
    \frac{1}{N} \sum_{n=1}^N \ln p(z_{k,n}|\theta_2,\tau_2,s_n),
\eeq
where $p(z|\theta_2,\tau_2,s)$ is the density of the random variable $Z_k$ in the model
\eqref{eq:zmod}.
Recall that the singular values $s_n$ are known to the estimator.
Then, using these estimates, we perform the estimate of $\xbfhat_{2k}$ in line~\ref{line:x2}
of Algorithm~\ref{algo:em-vamp} using the estimated precision $\gamma_{2k} = 1/\tauhat_{2k}$.

To analyze this estimator rigorously,
we make the simplifying assumption that the random variable, $S$,
representing the distribution of singular values, is discrete.  That is, there are a finite number
of values $a_1,\cdots,a_L$ and probabilities $p_1,\ldots,p_L$ such that
\[
    \Pr(S^2=a_\ell) = p_\ell.
\]
Then, given vectors $\zbf_k$ and $\sbf$, define the $2L$ statistics
\[
    \phi_{\ell 0}(z,s) =  \indic{s^2 = a_\ell}, \quad
    \phi_{\ell 1}(z,s) = z^2\indic{s^2 = a_\ell}, \quad \ell=1,\ldots,L,
\]
and their empirical averages at time $k$,
\[
    \mu_{\ell 0} = \bkt{ \indic{s_n^2 = a_\ell} }, \quad
    \mu_{\ell 1} = \bkt{ z_{k,n}^2\indic{s_n^2 = a_\ell}}.
\]
Then, the ML estimate in \eqref{eq:theta2ML} can be rewritten  as
\beq \label{eq:theta2J}
    (\tauhat_{2k},\thetahat_{2k}) = \argmin_{\tau_2,\theta_2} J(\tau_2,\theta_2),
\eeq
where the objective function is
\beq \label{eq:theta2Jdis}
     J(\tau_2,\theta_2) :=
        = \frac{1}{N}\sum_{n=1}^N\left[ \frac{z_{k,n}^2}{s_n^2\tau_2 + \theta_2^{-1}} +
                \ln(s_n^2\tau_2 + \theta_2^{-1}) \right]
        = \sum_{\ell=1}^L \left[ \frac{\mu_{\ell 1}}{a_\ell\tau_2 + \theta_2^{-1}} +
                \mu_{\ell 0}\ln(a_\ell\tau_2 + \theta_2^{-1}) \right].
\eeq
Hence, we can compute the ML estimate \eqref{eq:theta2ML} from an objective function
computed from the $2L$ statistics, $\mu_{\ell 0}, \mu_{\ell 1}$, $\ell = 1,\ldots,L$.

To prove consistency of the ML estimate, first note that the SE analysis shows that
$\mu_{\ell 0} \arr \mubar_{\ell 0}$
and $\mu_{\ell 1} \arr \mubar_{\ell 1}$  where the limits are given by
\begin{align*}
    \mubar_{\ell 0} &= \Exp\left[ \phi_{\ell 0}(Z,S) \right] = P(S^2=a_\ell) = p_\ell, \\
    \mubar_{\ell 1} &= \Exp\left[ \phi_{\ell 1}(Z,S) \right] = \Exp(Z^2 | S^2=a_\ell)P(S^2=a_\ell) =
        p_\ell\left[ a_\ell \tau_{2k} + (\theta_2^0)^{-1} \right],
\end{align*}
where $(\tau_{2k},\theta_2^0)$ are the true values of $(\tau_2,\theta_2)$.
Hence, the objective function in \eqref{eq:theta2Jdis} will converge to
\beq \label{eq:theta2Jlim}
    \lim_{N \arr \infty}
        J(\tau_2,\theta_2) =
        \sum_{\ell=1}^L p_\ell \left[
            \frac{a_\ell \tau_2^0 + (\theta_2^0)^{-1}}{a_\ell\tau_2 + \theta_2^{-1}} +
                \mu_{\ell 0}\ln(a_\ell\tau_2 + \theta_2^{-1}) \right].
\eeq
Now, for any $y > 0$,
\[
    \argmin_{x > 0} \frac{y}{x} + \ln(x) = y.
\]
Thus, the minima of \eqref{eq:theta2Jlim} will occur at
\beq \label{eq:theta2lim}
    (\taubar_{2k},\thetabar_{2k}) := (\tau_{2k},\thetabf_1^0).
\eeq
It can also be verified that this minima is unique provided that there are at least
two different values $a_\ell$ with non-zero probabilities $p_\ell$.  That is, we require
that $S^2$ is not constant.  In this case, one can show that the mapping from the statistics
$\{ \mu_{\ell 0},\mu_{\ell 1})\}$ to the ML estimates
$(\tauhat_{2k},\thetahat_{2k})$ in \eqref{eq:theta2J} is
continuous at the limiting values $\{ \mubar_{\ell 0},\mubar_{\ell 1})\}$.
Hence, from the SE analysis, we obtain that $(\tauhat_{2k},\thetahat_{2k})
\arr (\taubar_{2k},\thetabar_{2k}) = (\tau_{2k},\theta_2^0)$,
in \eqref{eq:theta2lim}.   We have proven the following.

\begin{theorem} \label{thm:theta2cons}  Under the assumptions of Theorem~\ref{thm:em-se}, let $\theta_2^0$ denote the true noise precision and $\tau_{2k}$ be the true error variance
for some iteration $k$.
Suppose that
the random variable $S^2$ representing the squared singular values is discrete and non-constant.
Then, there exists a finite set of statistics $\phi_{\ell 0}(z_n,s_n)$ and $\phi_{\ell 1}(z_n,s_n)$
and parameter selection rule such that the estimates $(\tauhat_{2k},\thetahat_{2k})$
are asymptotically consistent in that
\[
    \lim_{N \arr \infty} (\tauhat_{2k},\thetahat_{2k})  = (\tau_{2k},\theta_1^0),
\]
almost surely.
\end{theorem}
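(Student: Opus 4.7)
The plan is to reduce consistency of $(\tauhat_{2k},\thetahat_{2k})$ to a standard M-estimation argument by finding a finite-dimensional statistic that pins down the joint law of the ``output-side residual'' in the limit. The starting point, already exploited to derive \eqref{eq:theta2T}, is the identity
\[
    \zbf_k := \Sbf\Vbf\tran\rbf_{2k} - \Ubf\tran\ybf = \Sbf\qbf_k + \xibf,
\]
which follows by substituting $\ybf=\Abf\xbf^0+\wbf$ together with the SVD \eqref{eq:ASVD} and the definitions in \eqref{eq:qerrdef}. By the state-evolution limit \eqref{eq:limqxi} of Theorem~\ref{thm:em-se}, the components of $(\zbf_k,\sbf)$ converge empirically to $(Z_k,S)$ with $Z_k = SQ_k + \Xi$, $Q_k\sim\Norm(0,\tau_{2k})$, $\Xi\sim\Norm(0,(\theta_2^0)^{-1})$, all independent. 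In particular, conditional on $S^2=a_\ell$, $Z_k$ is Gaussian with variance $v_\ell^0 := a_\ell\tau_{2k}+(\theta_2^0)^{-1}$.

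Since $S^2$ has only finitely many atoms $a_1,\ldots,a_L$ with probabilities $p_\ell>0$, this conditional Gaussian family is determined by the $L$ numbers $v_\ell = a_\ell\tau_2+\theta_2^{-1}$, and I would take the $2L$ statistics
\[
    \phi_{\ell 0}(z,s) = \indic{s^2=a_\ell},\qquad \phi_{\ell 1}(z,s) = z^2\indic{s^2=a_\ell},
\]
with empirical averages $\mu_{\ell 0},\mu_{\ell 1}$, and let $(\tauhat_{2k},\thetahat_{2k})$ minimize the (negative log-)likelihood
\[
    J_N(\tau_2,\theta_2) := \sum_{\ell=1}^L\left[\frac{\mu_{\ell 1}}{a_\ell\tau_2+\theta_2^{-1}} + \mu_{\ell 0}\ln(a_\ell\tau_2+\theta_2^{-1})\right],
\]
exactly as in \eqref{eq:theta2Jdis}. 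The indicators are not Lipschitz in $s$, but since $S$ is purely atomic, the empirical convergence \eqref{eq:Slim} gives $\mu_{\ell 0}\arr p_\ell$ directly, and conditioning on the subsequence $\{n:s_n^2=a_\ell\}$ and applying $PL(2)$ convergence to $z_{k,n}^2$ on each such subsequence gives $\mu_{\ell 1}\arr p_\ell v_\ell^0$ almost surely; an equivalent route is to smooth each indicator by a vanishing-bandwidth Lipschitz surrogate and pass to the limit via $PL(2)$ continuity.

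Combining these limits, $J_N(\tau_2,\theta_2)\arr \Jbar(\tau_2,\theta_2) := \sum_\ell p_\ell\bigl[v_\ell^0/v_\ell + \ln v_\ell\bigr]$ pointwise and almost surely. The elementary identity $\argmin_{x>0}\{y/x+\ln x\}=y$ shows that $\Jbar$ is minimized precisely on the set where $v_\ell=v_\ell^0$ for every $\ell$, i.e.\ on the solutions of the $L$ affine equations $a_\ell\tau_2+\theta_2^{-1}=v_\ell^0$ in the two unknowns $(\tau_2,\theta_2)$. The hypothesis that $S^2$ is non-constant guarantees that at least two of the $a_\ell$'s are distinct, so this system is of rank two and has the unique solution $(\tau_{2k},\theta_2^0)$. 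The final, and main, step is then argmin-continuity: I would close it by pinning a compact neighborhood of $(\tau_{2k},\theta_2^0)$ inside the open parameter cone $\{\tau_2\ge 0,\theta_2>0\}$ on which $J_N\arr \Jbar$ uniformly, and verifying that $\Jbar$ stays strictly above its interior minimum both on the boundary of that neighborhood and as $(\tau_2,\theta_2)$ escapes to infinity or to $\theta_2^{-1}\downarrow 0$. Non-constancy of $S^2$ is precisely what gives strict convexity of $\Jbar$ at the optimum (through the two independent affine combinations of $\tau_2$ and $\theta_2^{-1}$), hence uniqueness and robustness of the minimizer; the remaining bookkeeping is standard and yields $(\tauhat_{2k},\thetahat_{2k})\arr(\tau_{2k},\theta_2^0)$ almost surely.
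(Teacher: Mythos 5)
Your proposal follows essentially the same route as the paper's own proof: the same residual identity $\zbf_k=\Sbf\qbf_k+\xibf$, the same $2L$ indicator statistics $\phi_{\ell 0},\phi_{\ell 1}$, the same limiting objective $\sum_\ell p_\ell[v_\ell^0/v_\ell+\ln v_\ell]$ minimized via $\argmin_{x>0}\{y/x+\ln x\}=y$, and the same use of non-constancy of $S^2$ to get a rank-two system with unique solution $(\tau_{2k},\theta_2^0)$. The only difference is that you spell out two technical points the paper merely asserts (handling the non-Lipschitz indicators via atomicity or smoothing, and the argmin-continuity step), which is a welcome but not substantively different elaboration.
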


\section{Simulation Details} \label{sec:sim}

\paragraph*{Sparse signal recovery}
To model the sparsity, $\xbf$ is drawn as an i.i.d.\ Bernoulli-Gaussian (i.e., spike and slab) prior,
\begin{equation}
    p(x_n|\thetabf_1)=(1-\beta_x)\delta(x_n)+\beta_x{\mathcal N}(x_n;\mu_x,\tau_x)
    \label{eq:BG},
\end{equation}
where parameters $\thetabf_1 = \{\beta_x,\mu_x,\tau_x\}$ represent the
sparsity rate $\beta_x\in(0,1]$, the active mean $\mu_x\in\R$, and the active variance $\tau_x>0$.
Following \cite{RanSchFle:14-ISIT,Vila:ICASSP:15},
we constructed $\Abf \in \R^{M \times N}$ from the SVD $\Abf=\Ubf\Sbf\Vbf\tran$, whose orthogonal matrices $\Ubf$ and $\Vbf$ were drawn uniformly with respect to the Haar measure and whose singular values $s_i$ were constructed as a geometric series, i.e., $s_i/s_{i-1}=\alpha~\forall i>1$, with $\alpha$ and $s_1$ chosen to achieve a desired condition number
$\kappa = s_1/s_{\min(M,N)}$ as well as $\|\Abf\|_F^2=N$.
This matrix model provides a good test for the stability of AMP methods since it is known that the matrix having
a high condition number is the main failure mechanism in AMP \cite{RanSchFle:14-ISIT}.
Recovery performance was assessed using normalized MSE (NMSE) $\|\xbfhat-\xbf\|^2/\|\xbf\|^2$ averaged over $100$ independent draws of $\Abf$, $\xbf$, and $\wbf$.

\begin{figure}
\centering
\includegraphics[width=0.75\columnwidth]{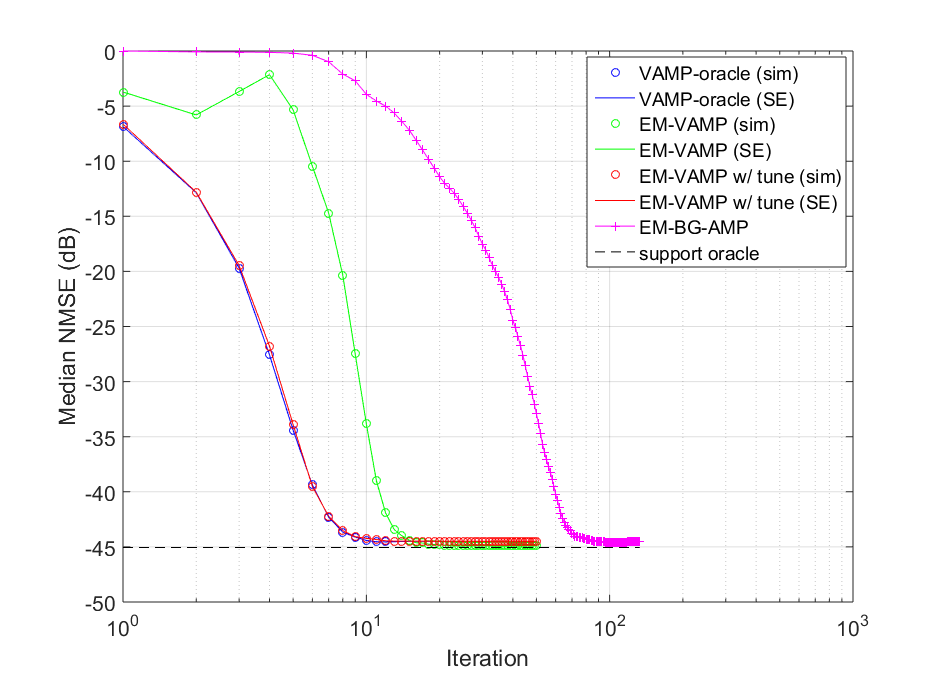}
\caption{Sparse signal recovery:  Plotted are the NMSE for various algorithms with a condition number of $\kappa = 10$. }
\label{fig:sparse_cond10}
\end{figure}

The left panel of figure~\ref{fig:sim} shows the NMSE versus iteration for various algorithms
under $M=512$, $N=1024$, $\beta_x=0.1$, $\mu_x=0$, and $(\tau_x,\theta_2)$ giving a signal-to-noise ratio of $40$~dB at a condition number of $\kappa=100$.
The identical simulation is shown in Fig.~\ref{fig:sparse_cond10}.  The trends in both figures are discussed in the main text.
The EM-VAMP algorithms were initialized with $\beta_x=(M/2)/N$, $\tau_x=\|\ybf\|^2 / \|\Abf\|_F^{2}\beta_x$, $\mu_x=0$, and $\theta_2^{-1}=M^{-1}\|\ybf\|^2$.
Four methods are compared: (i) VAMP-oracle, which is VAMP
under perfect knowledge of $\thetabf=\{\tau_w,\beta_x,\mu_x,\tau_x\}$;
(ii) EM-VAMP from \cite{fletcher2016emvamp};
(iii) EM-VAMP with auto-tuning described in Section~\ref{sec:autotune};
(iv)  the EM-BG-AMP algorithm from \cite{vila2013expectation} with damping from \cite{Vila:ICASSP:15}; and
(v) a lower bound obtain by an oracle estimator that knows the support of the vector $\xbf$.
For algorithms (i)--(iii), we have plotted both the simulated median NMSE as well as the state-evolution (SE)
predicted performance.

\paragraph*{Sparse image recovery}
The true image is shown in the top left panel of Fig.~\ref{fig:sat_images} and can be seen to be sparse in the pixel domain.
Three image recovery algorithms were tested:
(i) the EM-BG-AMP algorithm \cite{vila2013expectation} with damping from \cite{Vila:ICASSP:15};
(ii) basis pursuit denoising using SPGL1 as described in \cite{van2008probing}; and
(iii) EM-VAMP with auto-tuning described in Section~\ref{sec:autotune}.
For the EM-BG-AMP and EM-VAMP algorithms, the BG model was used, as before.
SPGL1 was run in ``BPDN mode," which solves $\min_{\xbf} \|\xbf\|_1$ subject to $\|\ybf-\Abf\xbf\| \leq \sigma$ for true noise variance $\sigma^2$.
All algorithms are run over 51 trials, with the median value plotted in Fig.~\ref{fig:sim}.
While the EM-VAMP theory provides convergence guarantees when the true distribution fits the
model class and in the large-system limit, we found some damping was required on finite-size, real images that are not generated from the assumed
prior.  In this experiment,
to obtain stable convergence, the EM-VAMP algorithm was aggressively damped as described in
\cite{rangan2016vamp}, using a damping ratio of 0.5 for the first- and second-order terms.

The computation times as a function of condition number $\kappa$ are shown in Fig.~\ref{fig:sat_comp_time}.
All three methods benefit from the fact that the matrix $\Abf$ has a fast implementation.  In addition,
for EM-VAMP, the matrix has an easily computable SVD\@.  The figure shows that the EM-VAMP algorithm achieves dramatically faster performance at high condition numbers -- approximately 10 times faster than either EM-BG-AMP or SPGL1.
The various panels of Fig.~\ref{fig:sat_images} show the images recovered by each algorithm at condition number $\kappa=100$ for the realization that achieved the median value in Fig.~\ref{fig:sim} (right panel).

\begin{figure}
\centering
\includegraphics[width=0.8\columnwidth]{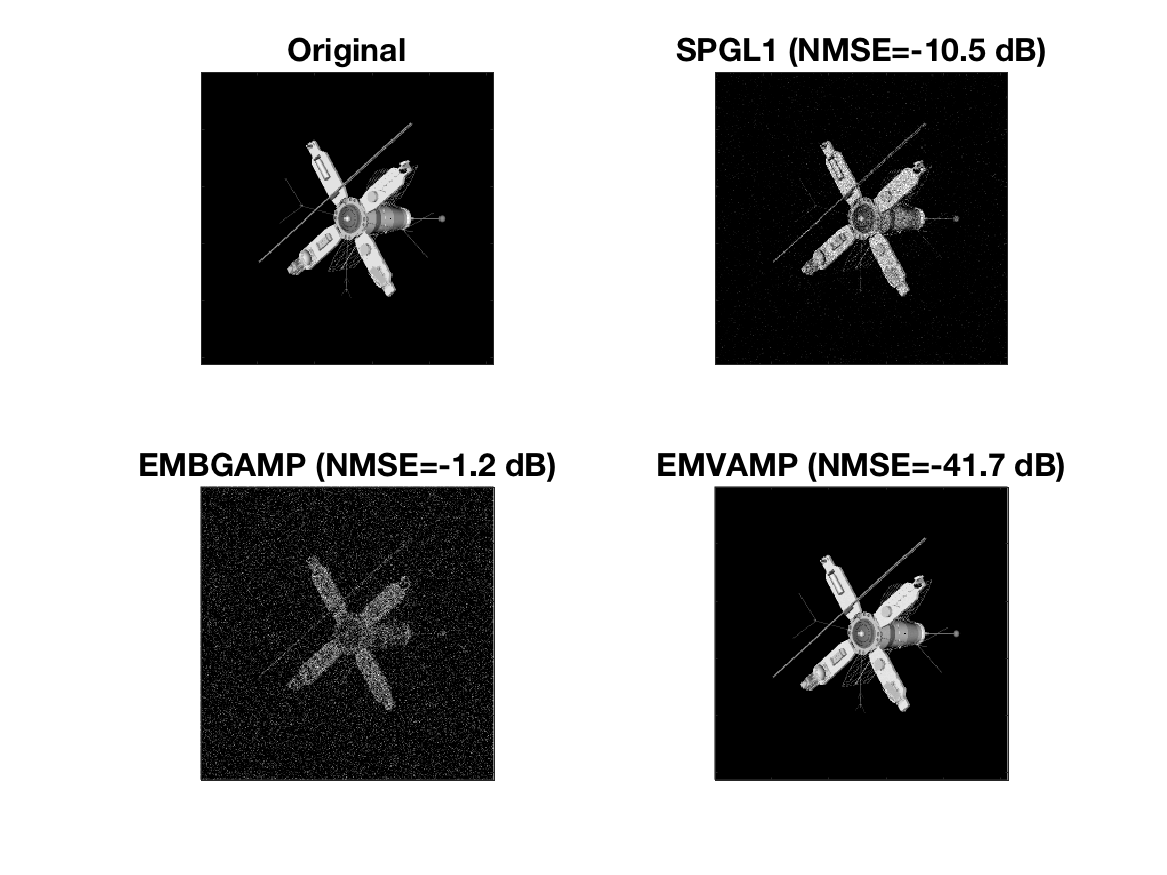}
\caption{Sparse image recovery:  Original 'satellite' image as used in \cite{Vila:ICASSP:15} along
with the median-NMSE recovered images for the various algorithms at measurement ratio
$M/N = 0.5$, condition number $\kappa=100$, and 40~dB SNR\@.
}
\label{fig:sat_images}
\end{figure}

\begin{figure}
\centering
\includegraphics[width=0.65\columnwidth]{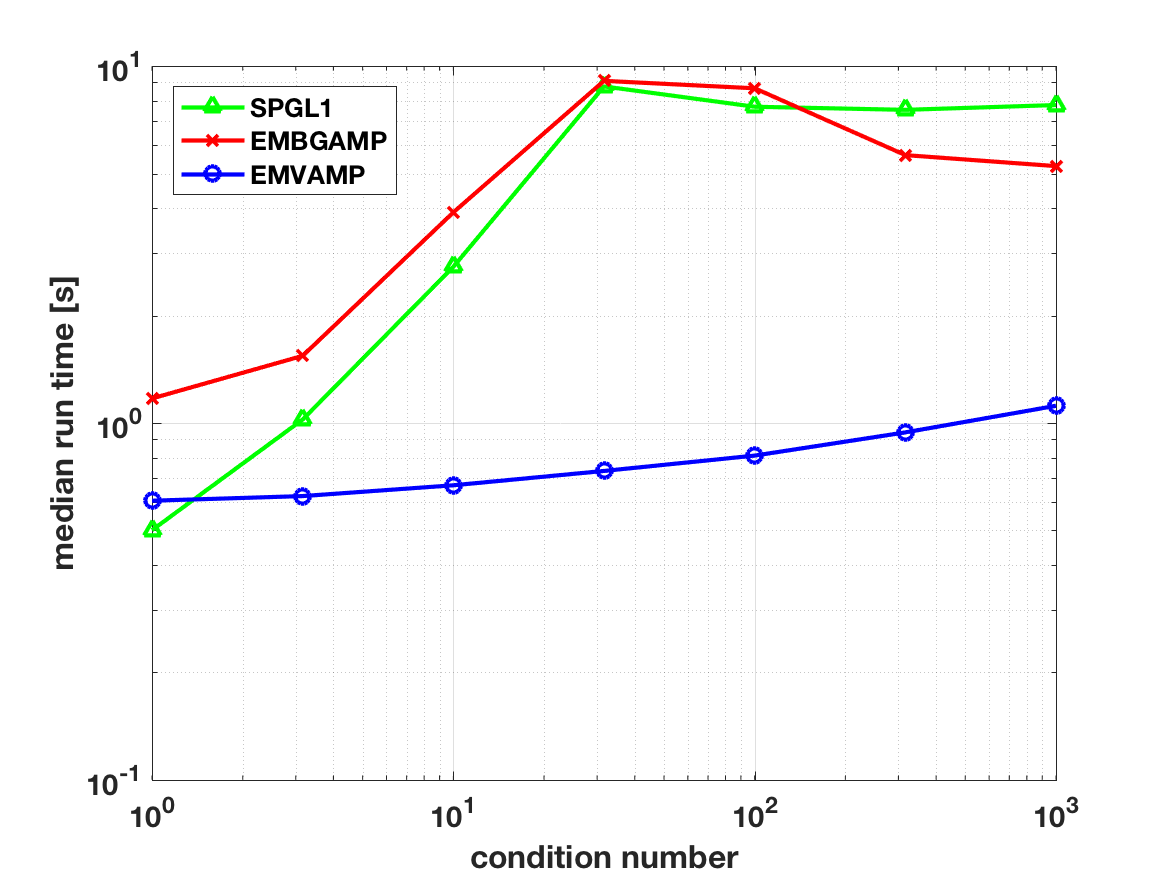}
\caption{Computation time versus condition number for various algorithms in the satellite image recovery problem. }
\label{fig:sat_comp_time}
\end{figure}

\newcommand{\SortNoop}[1]{}

\end{document}